\newcommand{\newsection}[1]
{\section{#1}\setcounter{theorem}{0} \setcounter{equation}{0} \par\noindent}
\newtheorem{theorem}{Theorem}
\newtheorem{lemma}[theorem]{Lemma}
\newtheorem{corollary}[theorem]{Corollary}
\newtheorem{proposition}[theorem]{Proposition}
\newtheorem{remark}[theorem]{Remark}
\newcommand{\beq}{ \begin{equation} }
\newcommand{\eeq}{ \end{equation} }
\newcommand{\br}{{\mathbb R}}
\newcommand{\bc}{{\mathbb C}}
\newcommand{\supp}{\mbox{\rm supp}\ }
\newcommand{\brn}{ { \mathbb{R}^n } }
\renewcommand{\Re}{\operatorname{Re}}
\newcommand{\four}{I\hspace{-1mm}V}
\newcommand{\six}{V\hspace{-1mm}I}
\newcommand{\seven}{V\hspace{-1mm}I\hspace{-1mm}I}
\newcommand{\eight}{V\hspace{-1mm}I\hspace{-1mm}I\hspace{-1mm}I}
\title{
Blowing-up solutions of  Klein-Gordon equations 
\\ 
with gauge variant semilinear terms 
\\
in Friedmann-Lema\'itre-Robertson-Walker spacetimes 
under finite speed of propagation
}
\author
{
Makoto NAKAMURA
\thanks
{Graduate School of Information Science and Technology,  
Osaka University, 
1-5 Yamadaoka, Suita, Osaka 565-0871, JAPAN.  
E-mail:  \texttt{makoto.nakamura.ist@osaka-u.ac.jp} }
\ \ and \ 
Takuma YOSHIZUMI
\thanks
{Graduate School of Information Science and Technology,  
Osaka University, 
1-5 Yamadaoka, Suita, Osaka 565-0871, JAPAN.  
E-mail:  \texttt{yoshizumi.takuma@ist.osaka-u.ac.jp} }
}
\date{}
\begin{document}

\maketitle

\begin{abstract}
Blowing-up solutions of Klein-Gordon equations with gauge variant semilinear terms are considered in Friedmann-Lema\'itre-Robertson-Walker spacetimes. 
Effects of spatial expansion or contraction on the solutions are studied through the scale-function and the curved mass.
\end{abstract}

\noindent
{\it Mathematics Subject Classification (2020)}: 
Primary 35L05; Secondary 35L71, 35Q75. 

\vspace{5pt}

\noindent
{\it Keywords} : 
semilinear Klein-Gordon equation, Cauchy problem, 
Friedmann-Lema\'itre-Robertson-Walker spacetime

%

\section{Introduction}

We show some blowing-up solutions of Klein-Gordon equations with gauge-variant semilinear terms in Friedmann-Lema\'itre-Robertson-Walker spacetimes (FLRW spacetimes for short).
FLRW spacetimes are solutions of the Einstein equations 
with the cosmological constant under the cosmological principle.
They describe the spatial expansion or contraction, 
and yield some important models of the universe.
Let $n\ge1$ be the spatial dimension, 
$a(\cdot)>0$ be a scale-function defined on an interval $[0,T_0)$ for some $0<T_0\le\infty$,
and let $c>0$ be the speed of light.
The metrics $\{g_{\alpha\beta}\}$ of FLRW spacetimes are expressed by 
$\sum_{0\le \alpha,\beta\le n}g_{\alpha\beta}dx^\alpha dx^\beta
:=
-c^2(dt)^2+
a^2(t)
\sum_{j=1}^n (dx^j)^2$,
where we have put the spatial curvature as zero,  
and $x^0=t$ is the time-variable 
(see e.g., 
\cite{Carroll-2004-Addison, DInverno-1992-Oxford}).
When $a$ is a positive constant, 
the spacetime 
reduces to the Minkowski spacetime.

We denote the first and second derivatives of one variable function $a$ by $\dot{a}$ and $\ddot{a}$.
The Klein-Gordon equation generated by the above metric 
$(g_{\alpha\beta})_{0\le \alpha,\beta\le n}$ is given by  
$(\sqrt{|g|})^{-1}\partial_\alpha\left(\sqrt{|g|}g^{\alpha\beta}\partial_\beta v\right)=m^2v+f(v)$ 
for the determinant $g:=\mbox{det}\left(g_{\alpha\beta}\right)$ 
and the inverse matrix $\left(g^{\alpha\beta}\right)$, 
i.e.,
\beq
\label{Cauchy-0}
\frac{1}{c^2}\partial_t^2v+\frac{n\dot{a}}{c^2a}\partial_tv-\frac{1}{a^2} \Delta v +m^2v+f(v)
=0,
\eeq
where 
$m$ denotes the mass,  
$\Delta:=\sum_{j=1}^n \partial_j^2$ denotes the Laplacian, 
and $f$ denotes a force term.
This equation is rewritten as 
\beq
\label{Eq-KGLambda}
c^{-2}\partial_t^2u-a^{-2}\Delta u
+M^2u
+a^{n/2}f(a^{-n/2}u)=0
\eeq
by the transformation $u:=va^{n/2}$,
where the function $M$ is called curved mass defined by  
\beq
\label{Def-M}
M^2
=
m^2-\frac{n(n-2)}{4c^2}\left( \frac{\dot{a}}{a} \right)^2
-\frac{n}{2c^2}\cdot\frac{\ddot{a}}{a},
\ \ \ \ 
M:=\sqrt {M^2}.
\eeq

In this paper, we consider the Cauchy problem in the gauge-variant case $f(v)=\lambda |v|^p$ in 
\eqref{Eq-KGLambda}, namely,    
\beq
\label{Cauchy}
\left\{
\begin{array}{l}
(c^{-2}\partial_t^2-a^{-2}(t) \Delta+M^2(t))u(t,x)-\lambda a^{-n(p-1)/2}(t)|u|^p(t,x)=0,
\\
u(0,\cdot)=u_0(\cdot),\ \ \partial_tu(0,\cdot)=u_1(\cdot)
\end{array}
\right.
\eeq
for $(t,x)\in [0,T)\times\br^n$ and $0<T\le T_0$, 
where 
$u_0$ and $u_1$ are given initial data.
We expect some dissipative effects when $\dot{a}>0$ in \eqref{Cauchy-0} 
from the term $n\dot{a}\partial_t v/c^2a$.
Actually, the spatial expansion yields dissipative effects in energy estimates for  the Klein-Gordon equation 
(see 
\cite{Nakamura-2014-JMAA, Nakamura-Yoshizumi-9999}).

Blowing-up solutions for \eqref{Cauchy} have been considered in 
\cite[Proposition 1]{Nakamura-2021-JMP} and 
\cite[Theorems 1.1 and 1.2]{Yagdjian-2009-DCDS} 
in the de Sitter spacetime.
The aim of this paper is to extend these results into general FLRW spacetimes including concrete examples of scale functions of Big-Rip and Big-Crunch.
We consider the real and purely imaginary mass $m\in \br\cup i\br$ 
since $m\in i\br$ plays an important role when we consider the breakdown of symmetry of potential well, or in $\phi^4$-theory in physics 
(see \cite{Nakamura-2021-JMP}).
We refer to \cite{Wei-Yong-2024-JMP} and the references therein 
on blowing-up solutions for small initial data in FLRW spacetimes 
for gauge-variant semilinear terms and vanishing mass $m=0$, i.e., wave equation, 
under some conditions on $p$ concerned with the Strauss conjecture 
(see also \cite{Galstian-Yagdjian-2020-RMP} in the Einstein-de Sitter spacetime),
while our results are for relatively large data and arbitrary $m\in \br\cup i\br$ and $1<p<\infty$.

The case of the gauge invariant semilinear term of the form 
$f(u)=\lambda |v|^{p-1}v$ in \eqref{Eq-KGLambda} 
has been considered by McCollum, Mwamba and Oliver in 
\cite[Theorem 1]{McCollum-Mwamba-Oliver-2024-NA} for $\dot{a}>0$, 
and in \cite[Theorem 1.4]{Nakamura-Yoshizumi-9999} for $\dot{a}\le 0$ by different techniques from this paper to show the blowing-up of the norm $\|u\|_2$,
while the blowing-up of $\int_\brn u(t,x)dx$ is shown in this paper.

The Cauchy problem of \eqref{Cauchy} including gauge invariant semilinear term is widely studied especially on local and global solutions, and their asymptotic behaviors 
(see 
\cite{Baskin-2012-AHP,
Galstian-Yagdjian-2015-NA-TMA, 
Hintz-Vasy-2015-AnalysisPDE,
Nakamura-2021-JMP,
Nakamura-Yoshizumi-9999,
Yagdjian-2012-JMAA} for closely related results).
We are focused on blowing-up solutions in this paper.

\vspace{9pt}


Now, we introduce some concrete examples of the scale-function $a$ and the curved mass $M$.
For $\sigma\in\mathbb{R}$ and the Hubble constant $H\in \br$, 
we put
\begin{equation}
\label{R-Def-T_0}
T_0:= 
\begin{cases}
\infty & \mbox{if}\ \ (1+\sigma)H\ge0,
\\
-\frac{2}{n(1+\sigma)H}(>0) & \mbox{if}\ \ (1+\sigma)H<0,
\end{cases}
\end{equation}
and define $a(\cdot)$ by 
\begin{equation}
\label{Def-a}
a(t):=
\begin{cases}
a_0\left\{1+\frac{n(1+\sigma)Ht}{2}\right\}^{2/n(1+\sigma)} & \mbox{if}\ \ \sigma\ne-1,
\\
a_0\exp(Ht) & \mbox{if}\ \ \sigma=-1
\end{cases}
\end{equation}
for $0\le t<T_0$. 
We note $a_0=a(0)$ and $H=\dot{a}(0)/a(0)$.
This scale-function $a(\cdot)$ describes the Minkowski space when $H=0$ 
(namely, $a(\cdot)$ is the constant $a_0$),
the expanding space when $H>0$ with $\sigma\ge -1$,
the blowing-up space when $H>0$ with $\sigma< -1$ 
(the ``Big-Rip'' in cosmology), 
the contracting space when $H<0$ with $\sigma\le -1$, 
and the vanishing space when $H<0$ with $\sigma> -1$ 
(the ``Big-Crunch'' in cosmology).
It describes the de Sitter spacetime when $\sigma=-1$.
The corresponding curved mass $M$ defined by \eqref{Def-M} is rewritten as  
\beq
\label{M-FLRW}
M^2
=
m^2+\sigma\left(\frac{nH}{2c}\right)^2\cdot 
\left\{1+\frac{n(1+\sigma)Ht}{2}\right\}^{-2}
\eeq
(see (2) in Lemma \ref{Lem-11}, below).

We say that the solution $u$ of \eqref{Cauchy} is a global solution if $u$ exists on the time-interval $[0,T_0)$ 
since $T_0$ is the end of the spacetime.
We note that the squared curved mass $M^2$ changes its signature at $T_1$ 
when $(1+\sigma)H<0$, $\sigma<0$, $m>\sqrt{|\sigma|}n|H|/2c$,
where 
\beq
\label{Def-T1}
T_1:=
-\frac{2}{n(1+\sigma)H}\left( 1-\frac{ \sqrt{|\sigma|}n|H| }{2cm} \right). 
\eeq
We also consider the case $M^2< 0$, 
i.e., purely imaginary curved mass $M\in i\br$,  
since it naturally appears in FLRW spacetimes.

For any function $a\in C^1([0,T_0),(0,\infty))$ and $0<r_0<\infty$, 
define $r(\cdot)$ and $q(\cdot)$ by  
\beq
\label{Def-r-q}
r(t):=r_0+\int_0^t \frac{c}{a(s)} ds, 
\ \ \ \ 
q(t):=\frac{a(t)r^2(t)}{a_0}.
\eeq
for $0\le t<T_0$.
Assume 
$q$ be a monotone function, and put 
\beq
\label{Def-TildeQ}
\widetilde{q}(t):=
\begin{cases}
q_0 & \mbox{if}\ \ \dot{q}\le 0,
\\
q(t) & \mbox{if}\ \ \dot{q}\ge 0
\end{cases}
\eeq
for $0\le t<T_0$,
where $q_0:=q(0)$ and $\dot{q}:=dq/dt$.

We denote the Lebesgue space by $L^q(I)$ for an interval $I\subset \br$ 
and $1\le q\le \infty$ with the norm 
\[
\|Y\|_{L^q(I)}:=
\begin{cases}
\left\{
\int_I |Y(t)|^q dt
\right\}^{1/q} & \mbox{if}\ \ 1\le q<\infty,
\\
{\mbox{ess.}\sup}_{t\in I} |Y(t)| 
& \mbox{if}\ \ q=\infty.
\end{cases}
\]

We obtain the following result on blowing-up solutions of the Cauchy problem \eqref{Cauchy}.

\begin{theorem}[Blowing-up solutions]
\label{Thm-22}
Let $T_0$, $a$ and $M$ satisfy 
\beq
\label{Condition-aM}
0<T_0\le \infty,
\ \ 
a\in C^2([0,T_0),(0,\infty)),
\ \ 
M^2\in C([0,T_0),\br). 
\eeq
Let 
$n\ge1$, 
$\lambda>0$, 
$1<p<\infty$, 
$0<\varepsilon<1$,
$0<\theta<1$,
$r_0>0$.
Let $\supp u_0\cup  \supp u_1\subset \{x\in \brn;\ |x|\le r_0\}$.
Let $N$ be any real number which satisfies  
\beq
\label{Cond-NM}
N\ge0, \ \ \ \ N^2+\inf_{0<t<T_0} M^2(t)\ge0.
\eeq
Let $r$ and $q$ be defined by \eqref{Def-r-q}.
Assume $q$ is a monotone function, and $\widetilde{q}$ defined by 
\eqref{Def-TildeQ} satisfies 
\beq
\label{Assume-AB}
A:=\inf_{0<t<T_0}  \frac{e^{cN(1-\varepsilon)t} }{ \widetilde{q}^{n/2}(t) }>0,
\ \ 
B:=\sup_{0<t<T_0} 
\frac{\widetilde{q}^{n/2}(t) \left\{N^2+M^2(t)\right\}^{1/(p-1)} }{e^{cNt} }<\infty.
\eeq
Put 
\beq
\label{Def-w0-w1}
w_0:=\Re\int_\brn u_0(x)dx,\ \ 
w_1:=\Re\int_\brn u_1(x)dx.
\eeq
Assume $w_0$ and $w_1$ satisfy 
\beq
\label{Assume-w0}
w_0>\frac{Q^{n/2}B}{\{(1-\theta)\lambda\}^{1/(p-1)} },
\eeq
\beq
\label{Assume-w1}
w_1\ge 
\max\left\{
\ cNw_0,
\ \ 
\sqrt{\frac{2\lambda c^2\theta}{p+1} }\cdot
\frac{w_0^{(p+1)/2}}{ (r_0^2Q)^{n(p-1)/4} }
\ \right\},
\eeq
where $Q:=\omega_n^{2/n} a_0$ 
and $\omega_n$ denotes the volume of unit ball in $\brn$.
Put 
\beq
\label{Def-D-T}
D:=\frac{2c^2\theta \lambda A^{p-1}}{(p+1)Q^{n(p-1)/2}},
\ \ \ \ 
T_\ast:=\frac{2}{\varepsilon (p-1) \sqrt{D}  w_0^{(p-1)/2} }.
\eeq

Then the solution $u\in C^2([0,T)\times \brn)$ of the Cauchy problem \eqref{Cauchy} blows-up in finite time in $L^1(\brn)$ 
no later than $T_\ast$ if $T_\ast \le T_0$.
More precisely, 
$\Re\int_\brn u(t,x)dx$ $\to\infty$ as $t\nearrow T$ for some $T$ with $0<T\le T_\ast$.
\end{theorem}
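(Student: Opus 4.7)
The plan is to reduce the problem to a scalar ODE inequality for $w(t):=\Re\int_{\brn}u(t,x)\,dx$, upgrade it to a first-order blow-up inequality, and then integrate.

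\textbf{Reduction to an ODE for $w$.} Since the initial data is supported in $\{|x|\le r_{0}\}$ and the propagation speed in the $(t,x)$-coordinates is $c/a(t)$, a standard domain-of-dependence argument gives $\mathrm{supp}\,u(t,\cdot)\subset\{|x|\le r(t)\}$ for all $t\in[0,T)$. Differentiating $w$ twice in $t$, using \eqref{Cauchy}, and observing that $\Re\int\Delta u\,dx=0$ by compact support, I get $\ddot w+c^{2}M^{2}w=c^{2}\lambda a^{-n(p-1)/2}\int_{\brn}|u|^{p}\,dx$. H\"older's inequality on the support ball gives $\int|u|^{p}\,dx\ge w^{p}(t)/(\omega_{n}r^{n}(t))^{p-1}$ as long as $w\ge 0$. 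Consolidating the constants via $Q=\omega_{n}^{2/n}a_{0}$ and $q(t)=a(t)r^{2}(t)/a_{0}$, then using the monotonicity of $q$ to pass from $q$ to $\widetilde q$, I arrive at the intermediate inequality
\[
\ddot w(t)+c^{2}M^{2}(t)\,w(t)\ \ge\ \frac{c^{2}\lambda\,w^{p}(t)}{Q^{n(p-1)/2}\,\widetilde q^{n(p-1)/2}(t)}.
\]

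\textbf{Absorbing the curved mass.} Writing $-c^{2}M^{2}w=c^{2}N^{2}w-c^{2}(N^{2}+M^{2})w$ and splitting the nonlinearity as $(1-\theta)+\theta$, I introduce the threshold $\phi(t):=Q^{n/2}Be^{cNt}/((1-\theta)\lambda)^{1/(p-1)}$; whenever $w(t)\ge\phi(t)$, the hypothesis $B$ ensures the $(1-\theta)$-piece of the nonlinearity dominates $c^{2}(N^{2}+M^{2})w$. The hypothesis $A$ then gives $\widetilde q^{-n(p-1)/2}(t)\ge A^{p-1}e^{-\alpha t}$ with $\alpha:=cN(1-\varepsilon)(p-1)$, producing the key differential inequality
\[
\ddot w(t)-(cN)^{2}w(t)\ \ge\ \tfrac{(p+1)D}{2}\,e^{-\alpha t}\,w^{p}(t).
\]
To propagate $w>\phi$ over the full lifespan $[0,T)$, I bootstrap: set $t_\ast:=\sup\{t\in[0,T):w>\phi\text{ on }[0,t]\}$. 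On $[0,t_\ast)$ the key inequality gives $\ddot w\ge(cN)^{2}w\ge 0$; combined with $\dot w(0)=w_{1}\ge cNw_{0}$ from \eqref{Assume-w1}, multiplying by $2\dot w\ge 0$ and integrating yields $\dot w^{2}\ge(cN)^{2}w^{2}$, hence $\dot w\ge cNw$ and $w(t)\ge w_{0}e^{cNt}>\phi(0)e^{cNt}=\phi(t)$. This strict inequality forces $t_\ast=T$ by continuity.

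\textbf{Blow-up ODE.} Now multiplying the key inequality by $2\dot w\ge 0$, integrating, and performing one integration by parts on $\int_{0}^{t}e^{-\alpha s}(w^{p+1})'\,ds$ (whose residual has the correct sign since $\alpha\ge 0$), I arrive at
\[
\dot w^{2}(t)\ \ge\ (cN)^{2}w^{2}(t)+D\,e^{-\alpha t}w^{p+1}(t)+\bigl[w_{1}^{2}-(cN)^{2}w_{0}^{2}-Dw_{0}^{p+1}\bigr].
\]
The second half of \eqref{Assume-w1} reads $w_{1}^{2}\ge D_{1}w_{0}^{p+1}$ with $D_{1}/D=(Ar_{0}^{n})^{-(p-1)}\ge 1$, since $\widetilde q(0)=r_{0}^{2}$ forces $A\le r_{0}^{-n}$; hence $w_{1}^{2}\ge Dw_{0}^{p+1}$. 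Splitting the bracket term by term against the first two summands using $w\ge w_{0}$ gives $\dot w^{2}\ge D\,e^{-\alpha t}\,w^{p+1}$. Finally, using $w(t)\ge w_{0}e^{cNt}$ to replace $e^{-\alpha t}$ by $(w_{0}/w)^{(1-\varepsilon)(p-1)}$ leads to the separable inequality $\dot w\ge\sqrt{D}\,w_{0}^{(1-\varepsilon)(p-1)/2}\,w^{1+\varepsilon(p-1)/2}$, which integrates to blow-up no later than $T_\ast$.

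\textbf{Main obstacle.} The most delicate step is the bootstrap in the second paragraph: the lower bound $w(t)\ge w_{0}e^{cNt}$ needed to propagate $w>\phi$ is itself a consequence of the key inequality, which in turn requires $w\ge\phi$ to be valid; the closure is mildly circular and must be handled by a careful open-closed continuity argument. A secondary bookkeeping point is that the boundary constant $w_{1}^{2}-(cN)^{2}w_{0}^{2}-Dw_{0}^{p+1}$ is not obviously nonnegative and must be absorbed piece-by-piece using the two halves of \eqref{Assume-w1} together with the comparison $Ar_{0}^{n}\le 1$.
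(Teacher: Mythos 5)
Your proposal is correct, and it follows the paper's overall strategy (Kato-type functional $w(t)=\Re\int_\brn u\,dx$, finite propagation speed plus H\"older to get the ODE inequality, exponential lower bound $w\ge w_0e^{cNt}$ via a continuity argument, then a separable first-order inequality giving exactly $T_\ast$), but the execution of two steps differs from the paper. Your bootstrap with the explicit threshold $\phi(t)=Q^{n/2}Be^{cNt}/((1-\theta)\lambda)^{1/(p-1)}$ is an inlined version of the paper's Lemma \ref{Lem-21}, which runs the same open--closed argument with the condition $(1-\theta)bw^{p-1}-M^2>N^2$ instead of $w>\phi$; the content is the same. The genuine divergence is in the energy step: the paper keeps the coefficient $\widetilde b(t)=\lambda(Q\widetilde q)^{-n(p-1)/2}$ in the functional $E=\dot w^2/(2c^2)-\theta\widetilde b\,w^{p+1}/(p+1)$ and uses $\dot{\widetilde b}\le0$ (from the monotonicity of $q$) to get $E(t)\ge E(0)\ge0$, where $E(0)\ge0$ is precisely the second half of \eqref{Assume-w1}; this yields the pointwise bound $\dot w^2\ge 2c^2\theta\widetilde b\,w^{p+1}/(p+1)$ with no boundary terms, and only afterwards is $A$ invoked. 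You instead replace $\widetilde q^{-n(p-1)/2}$ by its exponential lower bound $A^{p-1}e^{-\alpha t}$ before integrating, handle the explicit weight by integration by parts with the sign $\alpha=cN(1-\varepsilon)(p-1)\ge0$, and absorb the boundary constant using both halves of \eqref{Assume-w1} together with the observation $Ar_0^n\le1$ (from evaluating the infimum defining $A$ at $t\to0^+$, where $\widetilde q\to r_0^2$) -- a small extra lemma the paper does not need, but which is correct. Both routes land on the same inequality $\dot w\ge\sqrt{D}\,w_0^{(1-\varepsilon)(p-1)/2}w^{1+\varepsilon(p-1)/2}$ and the same $T_\ast$; the paper's version is slightly cleaner in that the monotone truncation $\widetilde b$ disposes of all boundary bookkeeping at once, while yours uses the monotonicity of $q$ only through $\widetilde q\ge q$ and defers the dissipative structure to the sign of $\alpha$. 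One presentational caveat: the finite speed of propagation you invoke as ``standard'' is proved in the paper's appendix (Proposition \ref{Prop-Finite}) for monotone $a$, and in a full write-up you should cite or reproduce such a statement rather than leave it implicit.
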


\vspace{10pt}

We note our blowing-up solutions are obtained for relatively large data by the assumptions \eqref{Assume-w0} and \eqref{Assume-w1}. 
Our result shows that the upper bound of the life-span given by $T_\ast$ in \eqref{Def-D-T} has the order $T_\ast=O(w_0^{-(p-1)/2})$.
We obtain the following corollary from Theorem \ref{Thm-22} 
for the scale function and the curved mass in  \eqref{Def-a} and \eqref{M-FLRW}.

\begin{corollary}[Blowing-up solutions under \eqref{Def-a} and \eqref{M-FLRW}]
\label{Cor-24}
Let 
$n\ge1$, 
$\lambda>0$,
$1<p<\infty$, 
$0<\varepsilon<1$,
$0<\theta<1$,
$r_0>0$.
Let $m\in \br\cup i\br$.
Let $H\in \br$, $\sigma\in \br$ and $N>0$ satisfy one of the following conditions from (i) to (viii).

\vspace{3pt}

(i) $H=0$, $\sigma\in \br$, $N^2+m^2\ge0$, $r_0>0$.

(ii) $H>0$, $\sigma\ge0$, $N^2+m^2\ge0$, $r_0>0$.

(iii) $H>0$, $-1<\sigma<0$, $N^2+m^2+\sigma\left(\frac{nH}{2c}\right)^2\ge0$, $r_0>0$.

(iv) $H>0$, $\sigma=-1$, $N^2+m^2-\left(\frac{nH}{2c}\right)^2\ge0$, $N>\frac{nH}{2c(1-\varepsilon)}$, $r_0>0$.

(v) $H<0$, $\sigma>0$, $N^2+m^2+\sigma\left(\frac{nH}{2c}\right)^2\ge0$, $r_0\ge-\frac{2c}{a_0H}$.

(vi) $H<0$, $\sigma=0$, $N^2+m^2\ge0$, $r_0>0$.
Moreover, $r_0\ge -\frac{2c}{a_0H}$ if $n\ge2$.

(vii) $H<0$, $\sigma=-1$, $N^2+m^2-\left(\frac{nH}{2c}\right)^2\ge0$, $r_0\le \frac{2c}{a_0|H|}$, $N>\frac{n|H|}{2c(1-\varepsilon)}$.

(viii) $H<0$, $\sigma<-1$, $N^2+m^2+\sigma\left(\frac{nH}{2c}\right)^2\ge0$, $r_0\le \frac{2c}{a_0|H|}$.

\vspace{3pt}

Moreover, for $T_0$ given by \eqref{R-Def-T_0}, 
let $a$ and $M$ be the scale-function and the curved mass given by \eqref{Def-a} and \eqref{M-FLRW}.
Let $\supp u_0\cup  \supp u_1\subset \{x\in \brn;\ |x|\le r_0\}$.
Let $w_0$ and $w_1$ be the numbers defined by \eqref{Def-w0-w1} 
with \eqref{Assume-w0} and \eqref{Assume-w1}.
Let $D$ and $T_\ast$ be numbers defined by \eqref{Def-D-T}.

Then 
the condition \eqref{Cond-NM} holds, 
$q$ defined by \eqref{Def-r-q} is a monotone function, 
and $\widetilde{q}$ defined by \eqref{Def-TildeQ} satisfies \eqref{Assume-AB}.
So that, 
the all assumptions in Theorem \ref{Thm-22} are satisfied, and 
the result in Theorem \ref{Thm-22} holds.
Namely, the solution $u\in C^2([0,T)\times \brn)$ of the Cauchy problem \eqref{Cauchy} blows-up in finite time in $L^1(\brn)$ 
no later than $T_\ast$ if $T_\ast \le T_0$.
More precisely, 
$\Re\int_\brn u(t,x)dx$ $\to\infty$ as $t\nearrow T$ for some $T$ with $0<T\le T_\ast$.
\end{corollary}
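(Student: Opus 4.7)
The plan is to verify that, for each of the eight regimes (i)--(viii), the three structural hypotheses of Theorem \ref{Thm-22}---the mass lower bound \eqref{Cond-NM}, the monotonicity of $q$, and the boundedness conditions \eqref{Assume-AB}---all follow from the explicit formulas \eqref{Def-a} and \eqref{M-FLRW} together with the stated restrictions on $H$, $\sigma$, $N$ and $r_0$; once these are confirmed, the blow-up conclusion is a direct quotation of Theorem \ref{Thm-22}. For \eqref{Cond-NM} I would read off from \eqref{M-FLRW} that $M^2(t)-m^2=\sigma(nH/2c)^2(1+\alpha Ht)^{-2}$ with $\alpha:=n(1+\sigma)/2$, observe that the factor $(1+\alpha Ht)^{-2}$ is monotone on $[0,T_0)$ with values in $(0,1]$ whenever $(1+\sigma)H\ge 0$, and case-split on the sign of $\sigma$ to show that $\inf M^2$ equals either $m^2$ or $m^2+\sigma(nH/2c)^2$; this reproduces exactly the eight lower bounds on $N^2$. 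The regimes $H>0,\sigma<-1$ and $H<0,-1<\sigma<0$ (Big-Rip and Big-Crunch) are absent because $(1+\alpha Ht)^{-2}\to\infty$ at $T_0$ there, which would force $\inf M^2=-\infty$.

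For monotonicity of $q$ I would differentiate \eqref{Def-r-q} using $\dot r=c/a$ to obtain
\[
\dot q(t)=\frac{r(t)\bigl(\dot a(t)\,r(t)+2c\bigr)}{a_0},
\]
so the sign of $\dot q$ is that of $\dot a(t)r(t)+2c$. For $H\ge 0$ (cases (i)--(iv)) one has $\dot a\ge 0$, so $\dot q>0$ on $[0,T_0)$ and $\widetilde q=q$ by \eqref{Def-TildeQ}. For $H<0$ (cases (v)--(viii)) the sign of $\dot a(0)r_0+2c$ is that of $2c-a_0|H|r_0$; the constraints $r_0\ge 2c/(a_0|H|)$ in (v)--(vi) and $r_0\le 2c/(a_0|H|)$ in (vii)--(viii) pin down the sign at $t=0$, and a short computation with the explicit forms of $r$ and $a$ shows that this sign is preserved on $[0,T_0)$, yielding either $\widetilde q\equiv q_0=r_0^2$ or $\widetilde q=q$.

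For \eqref{Assume-AB} I would argue by asymptotic comparison. When $T_0<\infty$ (cases (v), (vi)), the interval is bounded and $\widetilde q$, $N^2+M^2$ are continuous and strictly positive, so $A>0$ and $B<\infty$ are automatic. When $T_0=\infty$ and $a$ has polynomial growth/decay (cases (i)--(iii), (viii)), an explicit integration of \eqref{Def-r-q} shows $\widetilde q(t)=O(t^\gamma)$ for some $\gamma\ge 0$, which is dominated by $e^{cN(1-\varepsilon)t}$ since $N>0$, and $M^2$ is bounded, so both conditions hold. When $T_0=\infty$ and $\sigma=-1$ (cases (iv), (vii)), one computes $r(t)=r_0+(c/(a_0H))(1-e^{-Ht})$ and hence $q(t)=r^2(t)e^{Ht}\sim C\,e^{|H|t}$ for an explicit positive $C$, so $\widetilde q^{n/2}\sim e^{n|H|t/2}$; the strict threshold $N>n|H|/(2c(1-\varepsilon))$ appearing in (iv) and (vii) is exactly what ensures simultaneously $cN(1-\varepsilon)>n|H|/2$ (giving $A>0$) and $cN>n|H|/2$ (giving $B<\infty$).

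The main obstacle I anticipate is the bookkeeping in the contracting regime $H<0$: one must arrange jointly the support bound on $r_0$ (from the backward light cone), the sign of $\dot a\,r+2c$ (which dictates the form of $\widetilde q$), and the behaviour of $M^2$ near $T_0$ (which, for $\sigma<-1$, can drive $M^2$ unboundedly negative). The case splits $\sigma>0$, $\sigma=0$, $\sigma=-1$, $\sigma<-1$ in (v)--(viii) and the sharp numerical constants in their $r_0$- and $N$-conditions emerge precisely from requiring these three constraints to be simultaneously satisfiable, which is why certain combinations of signs are omitted from the statement.
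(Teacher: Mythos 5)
Your overall strategy is the same as the paper's: check \eqref{Cond-NM} from the explicit formula \eqref{M-FLRW} (this is Lemma \ref{Lem-11}(3)), check monotonicity of $q$ from the sign of $\dot q=r(\dot a r+2c)/a_0$ (this is Lemma \ref{Lem-23}), and check \eqref{Assume-AB} by comparing growth orders, with the threshold $N>n|H|/(2c(1-\varepsilon))$ exactly when $\sigma=-1$. However, there is a genuine gap in your treatment of the finite-$T_0$ cases. You claim that for (v) and (vi) the bounds $A>0$, $B<\infty$ are ``automatic'' because the interval is bounded and the functions are continuous; continuity on the half-open interval $[0,T_0)$ does not bound a supremum, and in case (v) ($H<0$, $\sigma>0$) the paper's own Lemma \ref{Lem-11}(3)(v) gives $M^2(t)\to+\infty$ as $t\to T_0$, while the constraint $r_0\ge 2c/(a_0|H|)$ forces $\dot q\le0$, hence $\widetilde q\equiv q_0$ is a positive constant and $e^{cNt}$ is bounded on $[0,T_0)$. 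So the quantity whose supremum defines $B$ is unbounded in $t$ unless some cancellation is exhibited, and your argument supplies none; this is precisely the delicate point the paper addresses by tracking the orders of $\widetilde q$, $N^2+M^2$ and $e^{cNt}$ (its conditions (VII)/(VIII)), not by compactness. You must engage with the divergence of $M^2$ at $T_0$ in case (v) rather than dismiss it. (Case (vi) is harmless only because there $\sigma=0$ makes $M^2\equiv m^2$ constant, which is a different reason from the one you give.)

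A second, smaller issue is the monotonicity step for $H<0$. The assertion that the sign of $\dot a\,r+2c$ at $t=0$ ``is preserved on $[0,T_0)$'' is exactly the nontrivial content of Lemma \ref{Lem-23}, and it is false without the stated restrictions: e.g.\ for $H<0$, $\sigma>0$ and $r_0<2c/(a_0|H|)$ the quantity $d=r+\frac{2c}{a_0H}(a/a_0)^{n(1+\sigma)/2-1}$ starts negative and increases to $+\infty$, so $\dot q$ changes sign. The paper's proof runs a case analysis around the thresholds $\sigma=-1+1/n$ and $\sigma=-1+2/n$, and this is also where your summary mis-states case (vi): the bound $r_0\ge 2c/(a_0|H|)$ is assumed only for $n\ge2$, while for $n=1$ (where $\sigma=0=-1+1/n$) monotonicity holds for every $r_0>0$ because $d\equiv r_0+2c/(a_0H)$ is constant in time. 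So the ``short computation'' you defer is in fact the case analysis that produces the $r_0$- and $n$-dependent hypotheses of (v)--(viii), and it needs to be carried out, not asserted.
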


\begin{remark}
The condition $r_0\ge-2c/a_0 H$ in (v) and (vi) in Corollary \ref{Cor-24} is from the monotonicity assumption on $\dot{q}\le 0$ in Theorem \ref{Thm-22}.
Under this condition, $\widetilde{q}$ in \eqref{Def-TildeQ} is constant, 
and the second assumption in \eqref{Assume-w1} becomes weaker as $r_0$ tends to $\infty$, while \eqref{Assume-w0} remains invariant.
So that, blowing-up solutions are obtained for wider $w_1$ as the support of initial data tends to large.
\end{remark}

Kato \cite{Kato-1980-CPAM} has shown the blowing-up of $\int_\brn u(t,x) dx$ in finite time 
in the case of $H=0$ and a general elliptic operator instead of $\Delta$.
Yagdjian \cite[Theorems 1.1 and 1.2]{Yagdjian-2009-DCDS} has considered the case $H=1$ with $M^2<0$.
His method is also valid in the case $m\in i\br$ which is applied to the Klein-Gordon equation with the Higgs potential 
(see the introduction in \cite[(8)]{Balogh-Banda-Yagdjian-2019-CommNonlinearSciNumerSimulat}, 
and \cite[Theorem 3.1]{Yagdjian-2012-JMAA}).
The case $H\in \br$ and $\sigma=-1$ has been considered in 
\cite[Proposition 1]{Nakamura-2021-JMP} under $m\in i\br$ and $H>-c/r_0$.
We extend these results into general case $H\in \br$, $\sigma\in \br$ and $m\in \br\cup i\br$, 
and we show how the spatial variance affects behaviors of solutions.

The excluded case of the conditions from (i) to (viii) on $H$ and $\sigma$ 
in Corollary \ref{Cor-24} is the case $\sigma<0$ with $(1+\sigma)H<0$, 
under which the curved mass $M^2$ tends to $-\infty$ as $t$ tends to $T_0$ 
(see (vi) in Lemma \ref{Lem-11}, below).
Our argument is not valid for this case, which requires further techniques.

We denote the inequality $A\le CB$ by $A\lesssim B$ for some constant $C>0$ which is not essential for the argument.

This paper is organized as follows.
In Section \ref{Sec-Pre}, 
we collect fundamental properties on the scale function and the curved mass, 
some ordinary differential equation of second order, 
and some auxiliary functions corresponding to the finite speed of propagation of waves in FLRW spacetimes, 
which are required to prove 
Theorem \ref{Thm-22} and  
Corollary \ref{Cor-24} 
in Sections \ref{Sec-Thm-22} and \ref{Sec-Cor-24}.
An appendix is included on the finite speed of propagation in Section \ref{Sec-Appendix}.


\newsection{Preliminaries}
\label{Sec-Pre}
We use the following properties of $a$ and $M$ in the case of FLRW  spacetimes, 
which follow from direct calculation, to prove the results in the previous section.

\begin{lemma}
\label{Lem-11}
(\cite[Lemmas 2.6 and 2.7]{Nakamura-Yoshizumi-9999}.)
Let $m\in \br$, $H\in \br$ and $\sigma\in \br$.
Let $T_0$, $a$ and $M$ be given by \eqref{R-Def-T_0}, \eqref{Def-a} and \eqref{M-FLRW}.
Then the following results hold.

(1) 
\ $\frac{\dot{a}}{{a}}=H\left(\frac{a}{a_0}\right)^{-n(1+\sigma)/2}$.

(2) 
\ $\displaystyle M^2
=
m^2+\sigma\left(\frac{nH}{2c}\right)^2
\left\{1+\frac{n(1+\sigma)Ht}{2}\right\}^{-2}$.

(3) 
\ $M^2$ satisfies 
\[
\begin{array}{ll}
(i) \ \ M^2=m^2 & \mbox{if }\ H=0, \ \mbox{or}\ \sigma=0,
\\
(ii)\ \ M^2=m^2-\left(\frac{nH}{2c}\right)^2 & \mbox{if }\ H\neq0, \ \sigma=-1,
\\
(iii)\ \ m^2<M^2\le m^2+\sigma\left(\frac{nH}{2c}\right)^2 & \mbox{if }\ H>0, \ \sigma>0,
\\
(iv)\ \ m^2+\sigma\left(\frac{nH}{2c}\right)^2\le M^2<m^2 & \mbox{if }\ (1+\sigma)H>0, \ \sigma<0,
\\
(v)\ \ m^2+\sigma\left(\frac{nH}{2c}\right)^2\le M^2\to \infty \ (t\to T_0) & \mbox{if }\ H<0, \ \sigma>0,
\\
(vi)\ \ m^2+\sigma\left(\frac{nH}{2c}\right)^2\ge M^2\to -\infty \ (t\to T_0) & \mbox{if }\ (1+\sigma)H<0, \ \sigma<0.
\end{array}
\]
\end{lemma}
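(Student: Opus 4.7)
The plan is to verify (1), (2), (3) by direct differentiation of the closed-form scale function \eqref{Def-a} and by case analysis of the sign of the auxiliary factor
$F(t):=1+n(1+\sigma)Ht/2$ that governs both $a$ and $M^2$.

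For (1), I would treat $\sigma\neq-1$ and $\sigma=-1$ separately. In the first case, $a(t)=a_0 F(t)^{2/n(1+\sigma)}$, so the chain rule gives $\dot a=a_0 H F^{2/n(1+\sigma)-1}$ and hence $\dot a/a=HF^{-1}$. Since $(a/a_0)^{n(1+\sigma)/2}=F$, this equals $H(a/a_0)^{-n(1+\sigma)/2}$, as claimed. For $\sigma=-1$ the formula $a=a_0e^{Ht}$ gives $\dot a/a=H$, and the exponent $-n(1+\sigma)/2$ vanishes, so both sides equal $H$.

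For (2), I would first compute $\ddot a/a$ using the identity $\ddot a/a=(\dot a/a)^{\cdot}+(\dot a/a)^2$. Starting from $\dot a/a=HF^{-1}$ (for $\sigma\neq -1$) one gets $(\dot a/a)^{\cdot}=-\tfrac{n(1+\sigma)H^2}{2}F^{-2}$, whence
\[
\frac{\ddot a}{a}=H^2F^{-2}\Bigl(1-\tfrac{n(1+\sigma)}{2}\Bigr)=\frac{H^2 F^{-2}}{2}\bigl(2-n(1+\sigma)\bigr).
\]
Substituting into \eqref{Def-M} collapses the bracket by an arithmetic identity
\[
n(n-2)+n\bigl(2-n(1+\sigma)\bigr)=-n^2\sigma,
\]
leaving $M^2=m^2+\sigma(nH/2c)^2 F^{-2}$, which is \eqref{M-FLRW}. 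The case $\sigma=-1$ (where $a=a_0e^{Ht}$, $\dot a/a=\ddot a/a^{1/2}\!\!{}^2/\!\!=H$, $\ddot a/a=H^2$) is checked directly and agrees with the general formula at $F\equiv 1$.

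For (3), each subcase follows from the identity in (2) by inspection. The sign and monotonicity of $F^{-2}$ determine the sign and range of the correction term $\sigma(nH/2c)^2F^{-2}$: if $H=0$ or $\sigma=0$ the correction vanishes (case (i)); if $\sigma=-1$ then $F\equiv 1$ and the correction is simply $-(nH/2c)^2$ (case (ii)); if $(1+\sigma)H>0$ then $T_0=\infty$ and $F$ increases from $1$ to $\infty$, so $F^{-2}\in(0,1]$, giving the bounded two-sided estimates in (iii) and (iv) with the sign of the correction dictated by $\sigma$; if $(1+\sigma)H<0$ then $T_0<\infty$ and $F$ decreases from $1$ to $0$, so $F^{-2}\to\infty$, yielding the one-sided bounds and the blow-up of $M^2$ to $+\infty$ or $-\infty$ in (v) and (vi) according to the sign of $\sigma$. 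The only delicate bookkeeping will be pairing the sign of $H$ with the sign of $\sigma$ correctly in (iv)--(vi); beyond that there is no obstacle, the entire lemma being a direct calculation (already recorded in \cite{Nakamura-Yoshizumi-9999}).
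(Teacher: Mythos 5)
Your proof is correct and is precisely the direct computation the paper has in mind: the paper itself gives no proof of this lemma, attributing it to \cite[Lemmas 2.6 and 2.7]{Nakamura-Yoshizumi-9999} and to ``direct calculation,'' and your verification---the identity $n(n-2)+n\bigl(2-n(1+\sigma)\bigr)=-n^2\sigma$ together with the monotone behaviour of $F^{-2}$ on $[0,T_0)$ in each sign configuration---checks out in every case. The only blemish is the garbled fragment ``$\dot a/a=\ddot a/a^{1/2}{}^2/=H$'' in the $\sigma=-1$ check, which should simply read $\dot a/a=H$ and $\ddot a/a=H^2$.
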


We prepare the following fundamental result on the positivity of the solution of an ordinary differential equation of second order.

\begin{lemma}
\label{Lem-21}
Let $M^2\in C([0,T_0),\br)$, $b\in C([0,T_0),(0,\infty))$.
Let $0<\theta<1$, $1<p<\infty$
Let $N$ be a number with $0\le N<\infty$, $N^2+\inf_{0<t<T_0} M^2\ge0$.
Let $w_0$ and $w_1$ be real numbers with 
\beq
\label{w0}
w_0>\sup_{0<t<T_0} 
\left[
e^{-cNt} 
\left\{
\frac{N^2+M^2(t)}{(1-\theta)b(t)}
\right\}^{1/(p-1)}
\right],
\ \ 
w_1\ge cN w_0.
\eeq
Then the solution $w$ of the Cauchy problem 
\beq
\label{Cauchy-w}
\begin{cases}
c^{-2} \ddot{w}(t)+M^2(t)w(t)-b(t)|w|^p(t)\ge 0, 
\\
w(0)=w_0,\ \ \dot{w}(0)=w_1
\end{cases}
\eeq
for $0\le t<T_0$ satisfies the followings, 
where $\dot{w}:=dw/dt$ and $\ddot{w}:=d^2w/dt^2$.

(1) $w(t)\ge w_0 e^{cNt}$.

(2) $(1-\theta)b w^{p-1}-M^2>N^2$.

(3) $c^{-2} \ddot{w}-N^2w-\theta bw^p\ge0$.

(4) $\dot{w}\ge w_1$.
\end{lemma}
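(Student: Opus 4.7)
The proof organizes around a bootstrap/continuity argument that simultaneously establishes (1), (2), (3), with (4) following as an easy corollary. First I would verify that the strict inequality (2) holds at $t=0$: evaluating the assumption on $w_0$ at $t=0$ gives $w_0^{p-1} > (N^2+M^2(0))/\{(1-\theta)b(0)\}$, which is exactly $(1-\theta)b(0)w_0^{p-1} - M^2(0) > N^2$. Since $w(0)=w_0>0$, the set
\[
T^\ast := \sup\left\{t\in[0,T_0): w(s)>0 \text{ and } (1-\theta)b(s)w^{p-1}(s)-M^2(s)>N^2 \text{ for all }s\in[0,t]\right\}
\]
is strictly positive by continuity.

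On $[0,T^\ast)$, I would derive (3) algebraically from the ODE inequality by splitting $b|w|^p = (1-\theta)bw^p + \theta b w^p$ (using $w>0$):
\[
c^{-2}\ddot{w} \;\ge\; -M^2 w + b w^p \;=\; \bigl\{(1-\theta)b w^{p-1}-M^2\bigr\}w + \theta bw^p \;>\; N^2 w + \theta b w^p.
\]
Next, to recover (1) from (3), I would use the factorization $\partial_t^2 - c^2N^2 = (\partial_t+cN)(\partial_t-cN)$. Setting $v := \dot{w}-cNw$, we have $v(0)=w_1-cNw_0\ge 0$ by hypothesis, and
\[
\dot{v}+cNv \;=\; \ddot{w}-c^2N^2 w \;\ge\; c^2\theta b w^p \;\ge\; 0,
\]
so $(e^{cNt}v)'\ge 0$, giving $v\ge 0$, i.e.\ $\dot{w}\ge cNw$. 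Then $(e^{-cNt}w)'\ge 0$ yields $w(t)\ge w_0 e^{cNt}$ on $[0,T^\ast)$, which is (1).

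With (1) in hand on $[0,T^\ast)$, the hypothesis on $w_0$ (written with $t$) gives
\[
w(t)^{p-1} \;\ge\; w_0^{p-1}e^{cN(p-1)t} \;>\; \frac{N^2+M^2(t)}{(1-\theta)b(t)},
\]
so the defining strict inequality of $T^\ast$ persists up to $T^\ast$. By continuity it would extend past $T^\ast$ if $T^\ast<T_0$, contradicting the supremum; hence $T^\ast=T_0$, and (1), (2), (3) hold on $[0,T_0)$. Finally, (4) is immediate: (3) and $w>0$, $N^2\ge 0$, $b>0$ give $\ddot{w}\ge c^2 N^2 w + c^2 \theta b w^p \ge 0$, so $\dot{w}$ is nondecreasing and $\dot{w}(t)\ge \dot{w}(0)=w_1$.

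The main obstacle is arranging the argument to avoid circularity: each of (1), (2), (3) appears to need one of the others, and the resolution is that all three simultaneously hold on a maximal interval whose boundary behavior contradicts maximality. The factoring trick $\ddot{w}-c^2N^2 w = (\partial_t+cN)(\partial_t-cN)w$ combined with the sign $w_1\ge cNw_0$ is the precise mechanism that converts the second-order inequality (3) into the first-order growth estimate (1).
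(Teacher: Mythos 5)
Your proposal is correct, and its overall architecture is the same as the paper's: a continuity/bootstrap argument that propagates the two conditions $w>0$ and $(1-\theta)bw^{p-1}-M^2>N^2$ from $t=0$ up to $T_0$, closing the loop via the exponential lower bound $w\ge w_0e^{cNt}$ and the hypothesis on $w_0$ (the paper phrases it as a contradiction at the first failure time $t_0$, you as maximality of $T^\ast$; these are the same argument). The only genuine difference is the mechanism converting the second-order inequality $c^{-2}\ddot{w}-N^2w\ge0$ into $\dot{w}\ge cNw$: the paper first deduces $\dot{w}\ge w_1\ge0$ from $\ddot{w}\ge c^2N^2w\ge0$, then multiplies by $\dot{w}$ to get the energy inequality $c^{-2}\dot{w}^2-N^2w^2\ge c^{-2}w_1^2-N^2w_0^2\ge0$ and extracts the square root using $w>0$, $\dot{w}\ge0$; you instead factor $\ddot{w}-c^2N^2w=(\partial_t+cN)(\partial_t-cN)w$, set $v=\dot{w}-cNw$ with $v(0)=w_1-cNw_0\ge0$, and use the integrating factor $e^{cNt}$. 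Both exploit the same hypothesis $w_1\ge cNw_0$; your first-order factorization is marginally cleaner in that it needs neither the preliminary sign $\dot{w}\ge0$ nor a square-root step, while the paper's multiplier argument directly yields (4) along the way. Your separate derivation of (4) from $\ddot{w}\ge0$ is also fine.
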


\begin{proof}
We assume that $t_0$ defined by 
\[
t_0:=\inf\{0\le t<T_0;\ w(t)\le 0\ \mbox{or}\ (1-\theta)b(t)|w|^{p-1}(t)-M^2(t)\le N^2\}
\]
exists and satisfies $0\le t_0<T_0$,
and we show a contradiction.
Since we have $w_0>0$ and $(1-\theta)b_0w_0^{p-1}-M_0^2>N^2$ 
by the first assumption in \eqref{w0}, 
we have $t_0\neq0$.
Since we have 
\[
w(t)>0,
\ \ 
(1-\theta)b(t)w^{p-1}(t)-M^2(t)> N^2
\]
for $0\le t<t_0$, 
we have 
\[
b w^p=(1-\theta)b w^p+\theta b w^p\ge (1-\theta)b w^p
\]
and 
\begin{eqnarray}
0 
&\le& 
c^{-2} \ddot{w}+M^2w-b|w|^p \nonumber \\
&=& 
c^{-2} \ddot{w}-
\left\{
(1-\theta)b w^{p-1}-M^2
\right\}w 
-\theta b w^p 
\nonumber \\
&\le& 
c^{-2} \ddot{w}-N^2w-\theta b w^p 
\nonumber\\
&\le& 
c^{-2} \ddot{w}-N^2w
\label{Proof-Lem-21-1000}
\end{eqnarray}
by the differential inequality in \eqref{Cauchy-w}.
Thus, we obtain 
\beq
\label{Proof-Lem-21-1300}
\dot{w}\ge w_1\ge0
\eeq
by $N^2w\ge0$,
where the second inequality follows from the second assumption in \eqref{w0}.
Multiplying $\dot{w}$ to the inequality $c^{-2}\ddot{w}-N^2w\ge0$ 
in \eqref{Proof-Lem-21-1000}, 
we have 
\[
c^{-2}\dot{w}^2-N^2w^2\ge c^{-2}w_1^2-N^2w_0^2\ge0,
\]
where the second inequality follows from the second assumption in \eqref{w0}.
So that, we obtain $\dot{w}\ge c N w$ by $w>0$ and $\dot{w}\ge0$, 
which yields 
\beq
\label{Proof-Lem-21-1500}
w(t)\ge w_0 e^{c N t}
\eeq 
for $0\le t<t_0$. 
Especially, we have 
\beq
\label{Proof-Lem-21-2000}
w(t_0)\ge w_0 e^{c N t_0}>0
\eeq 
as its limit.
By this and the first assumption in \eqref{w0}, we have 
\beq
\label{Proof-Lem-21-3000}
(1-\theta)b(t_0)|w|^{p-1}(t_0)-M^2(t_0)
\ge
(1-\theta)b(t_0)(w_0e^{cNt_0})^{p-1}-M^2(t_0)
> N^2.
\eeq
The results \eqref{Proof-Lem-21-2000} and \eqref{Proof-Lem-21-3000} 
contradicts to the definition of $t_0$.

Therefore, we obtain $w>0$ and $(1-\theta)bw^{p-1}-M^2>N^2$ on the interval $[0,T_0)$, which is the required result (2).
Repeating the above argument, 
the results (1), (3) and (4) follow from 
\eqref{Proof-Lem-21-1500},
\eqref{Proof-Lem-21-1000} 
and 
\eqref{Proof-Lem-21-1300}, 
respectively.
\end{proof}

We prepare the following result on the monotonicity of the function $q$ defined in \eqref{Def-r-q}.

\begin{lemma}
\label{Lem-23}
Let $H\in \br$, $\sigma\in \br$, $r_0>0$.
Let $a$, $r$, $q$ be defined by \eqref{Def-a} and \eqref{Def-r-q}.
Put 
\[
d:=r+\frac{2c}{a_0H} 
\left(
\frac{a}{a_0}
\right)^{n(1+\sigma)/2-1}
\]
for $H\neq0$.
Then the following results hold on $[0,T_0)$.

(1) $\dot{q}=\frac{2cr}{a_0}$ when $H=0$, 
and  
$\dot{q}=Hrd
\left(
\frac{a}{a_0}
\right)^{1-n(1+\sigma)/2}$ when $H\neq0$.

(2) 
$\dot{q}\ge0$ holds if $H\ge0$, $\sigma\in \br$, $r_0>0$, 
or if $H<0$, $\sigma\le-1+\frac{1}{n}$, $r_0\le -\frac{2c}{a_0H}$.

(3)
$\dot{q}\le0$ holds if $H<0$, $\sigma\ge-1+\frac{1}{n}$, $r_0\ge -\frac{2c}{a_0H}$.
\end{lemma}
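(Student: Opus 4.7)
The plan is to compute $\dot q$ directly and then perform a case analysis on the sign of the resulting expression. From \eqref{Def-r-q} we have $\dot r = c/a$, so $\dot q = (\dot a r^2 + 2cr)/a_0$. When $H = 0$, $\dot a = 0$ and this gives $\dot q = 2cr/a_0 \geq 0$ at once, settling both (1) and (2) in that subcase. When $H \neq 0$, Lemma \ref{Lem-11}(1) yields $\dot a = aH(a/a_0)^{-n(1+\sigma)/2}$, and a short algebraic manipulation puts $\dot q$ into the factored form $Hrd\,(a/a_0)^{1-n(1+\sigma)/2}$ with $d$ as in the statement, which proves (1). Since $r > 0$ and the power of $a/a_0$ is positive, the sign of $\dot q$ reduces to the sign of $Hd$.

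For $H > 0$, both summands defining $d$ are strictly positive, so $d > 0$ and $\dot q > 0$, which handles the remaining case of (2). For $H < 0$, I would analyze $d$ by differentiating it. Using Lemma \ref{Lem-11}(1) once more, all powers of $a/a_0$ collapse and the calculation gives the clean identity $\dot d = (c/a)\bigl(n(1+\sigma) - 1\bigr)$. Since $c/a > 0$, the sign of $\dot d$ is precisely the sign of $n(1+\sigma) - 1$, so the threshold $\sigma = -1 + 1/n$ appearing in the hypotheses emerges naturally. Moreover, at $t = 0$ we have $a = a_0$ and hence $d(0) = r_0 + 2c/(a_0 H)$, whose sign is governed by comparing $r_0$ with $-2c/(a_0 H) = 2c/(a_0|H|) > 0$.

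Combining these observations, for (2) with $H < 0$ the hypothesis $\sigma \leq -1 + 1/n$ forces $\dot d \leq 0$ while $r_0 \leq -2c/(a_0 H)$ forces $d(0) \leq 0$, so $d \leq 0$ on $[0,T_0)$ and $\dot q = Hrd\,(a/a_0)^{1-n(1+\sigma)/2} \geq 0$. The opposite inequalities in the hypotheses of (3) give $\dot d \geq 0$ and $d(0) \geq 0$, hence $d \geq 0$ throughout and $\dot q \leq 0$. The main obstacle is essentially bookkeeping: correctly interpreting $-2c/(a_0 H)$ as positive when $H < 0$, and verifying algebraically that the threshold $-1 + 1/n$ for $\sigma$ is exactly what controls $\dot d$. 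Once that single identity $\dot d = (c/a)(n(1+\sigma) - 1)$ is in hand, the remaining monotonicity argument is a straightforward sign analysis.
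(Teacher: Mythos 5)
Your argument is correct, and it takes a genuinely different route through parts (2) and (3) than the paper does. The computation of $\dot{q}$ in part (1) is the same in both. For the sign analysis, however, the paper first integrates $\dot{r}=c/a$ explicitly, obtaining closed-form expressions for $r$ (with a separate logarithmic formula when $\sigma=-1+\frac{2}{n}$) and hence for $d$ as in \eqref{Proof-Lem-23-2000}, and then runs a seven-fold case analysis on $\sigma$ tracking the monotone limits of $d$ as $t\to T_0$. You instead bypass the explicit form of $r$ entirely: differentiating $d$ and using $\dot{r}=c/a$ together with (1) of Lemma \ref{Lem-11} gives the single identity $\dot{d}=\frac{c}{a}\left\{n(1+\sigma)-1\right\}$ (which I checked; it is valid uniformly in $\sigma$, including the paper's special cases $\sigma=-1$ and $\sigma=-1+\frac{2}{n}$), so the sign of $d$ on all of $[0,T_0)$ follows from the sign of $\dot d$ together with $d(0)=r_0+\frac{2c}{a_0H}$, and then $\dot q = Hrd\,(a/a_0)^{1-n(1+\sigma)/2}$ yields (2) and (3) at once. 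Your approach buys a shorter, unified argument that makes the threshold $\sigma=-1+\frac{1}{n}$ appear directly as the zero of $\dot d$, whereas the paper's explicit formulas carry extra information (the precise growth or decay of $d$, e.g.\ $d\to\pm\infty$ or a finite limit as $t\to T_0$) that is not needed for the lemma itself. One cosmetic point: where you say ``the power of $a/a_0$ is positive,'' you mean that the factor $(a/a_0)^{1-n(1+\sigma)/2}$ is positive (true since $a>0$, whatever the sign of the exponent); stated that way, the reduction of the sign of $\dot q$ to that of $Hd$ is exactly right.
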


\begin{proof}
(1) 
Since we have $\dot{a}/a=H(a/a_0)^{-n(1+\sigma)/2}$ and $\dot{r}=c/a$ by (1) in Lemma \ref{Lem-11} and the definition of $r$,
we obtain 
\beq
\label{Proof-Lem-23-1000}
\dot{q}=
\frac{\dot{a}r^2}{a_0}+\frac{2ar\dot{r}}{a_0}
=
Hr^2\left(\frac{a}{a_0}\right)^{1-n(1+\sigma)/2} +\frac{2cr}{a_0}
=
H r d
\left(
\frac{a}{a_0}
\right)^{1-n(1+\sigma)/2}
\eeq
as required, 
where the last equality holds for $H\neq0$.

We prove (2) and (3). 
We have 
\[
r=r_0+
\begin{cases}
\frac{ct}{a_0} & \mbox{if}\ H=0,\ \sigma \in \br,
\\
\frac{2c}{ a_0H\left\{2-n(1+\sigma)\right\} }
\left\{
1-
\left(
\frac{a}{a_0}
\right)^{n(1+\sigma)/2-1}
\right\}
& \mbox{if}\ H\neq0,\ \sigma \neq -1+\frac{2}{n},
\\
\frac{c\log(1+Ht)}{a_0H} 
& \mbox{if}\ H\neq0,\ \sigma = -1+\frac{2}{n}.
\end{cases}
\]

(i) We have $\dot{q}\ge0$ if $H\ge0$ by (1).
Thus, we consider the case $H<0$ in the following.

(ii) 
If $H<0$ and $\sigma\neq-1+2/n$, then we have 
\beq
\label{Proof-Lem-23-2000}
d=r_0-\frac{2c}{a_0H\left\{n(1+\sigma)-2\right\} }
\left[
1+\{1-n(1+\sigma)\}
\left(
\frac{a}{a_0}
\right)^{n(1+\sigma)/2-1}
\right].
\eeq
Especially, we have $d(0)=r_0+2c/a_0H$.

(iii) 
If $H<0$ and $\sigma<-1+1/n$, then we have 
$d(t)\searrow -\infty$ as $t\to T_0$ by \eqref{Proof-Lem-23-2000}.
Especially, we have $d\le 0$ if $r_0\le -2c/a_0H$.

(iv) 
If $H<0$ and $\sigma=-1+1/n$, then we have 
$d=r_0+2c/a_0H$ by \eqref{Proof-Lem-23-2000}.
Especially, we have $d\ge 0$ if $r_0\ge -2c/a_0H$,
and $d\le 0$ if $r_0\le -2c/a_0H$.

(v) 
If $H<0$ and $-1+1/n<\sigma<-1+2/n$, then we have 
$d(t)\nearrow \infty$ as $t\to T_0$ by \eqref{Proof-Lem-23-2000}.
Especially, we have $d\ge r_0+2c/a_0H\ge0$ if $r_0\ge -2c/a_0H$.

(vi) 
If $H<0$ and $\sigma=-1+2/n$, then we have 
\[
d=r_0+\frac{c\log(1+Ht)}{a_0H}+\frac{2c}{a_0H}
\nearrow \infty
\]
as $t\to T_0$.
Since $d(0)=r_0+2c/a_0H$, we have $d\ge 0$ if $r_0\ge -2c/a_0H$.

(vii) 
If $H<0$ and $\sigma>-1+2/n$, then we have 
\[
r_0+\frac{2c}{a_0H}=d(0)\le 
d\nearrow r_0-\frac{2c}{ a_0H\{n(1+\sigma)-2\} }
\]
as $t\to T_0$ by \eqref{Proof-Lem-23-2000}.
Thus, we have $d\ge 0$ if $r_0\ge -2c/a_0H$.

The inequality $\dot{q}\ge0$ holds if and only if $Hd\ge0$ by \eqref{Proof-Lem-23-1000}. 
So that, we obtain the result (2) by (i), (iii), (iv),
and the result (3) by (iv), (v), (vi), (vii).
\end{proof}


\newsection{Proof of Theorem \ref{Thm-22}}
\label{Sec-Thm-22}
We prove Theorem \ref{Thm-22} in this section.

(1) 
Taking the real part of the differential equation in \eqref{Cauchy}, 
and integrating by spatial variables, 
we have 
\[
c^{-2} \ddot{w}+M^2w-h=0,
\]
where $w(t):=\Re\int_\brn u(t,x)dx$ and 
$h(t):=\lambda a^{-n(p-1)/2}\int_\brn |u(t,x)|^pdx$.
We have 
\[
|w|^p\le (\omega_n r^n)^{p-1} \int_\brn |u(t,x)|^pdx
\]
by the finite speed of propagation of waves 
(see Proposition \ref{Prop-Finite}, below), 
by which 
we obtain 
$h\ge b|w|^p$, where 
\[
b:=\lambda 
\left(
\omega_n^{2/n} a r^2
\right)^{-n(p-1)/2}>0
\]
since $\lambda>0$, and $a$, $r$ are positive valued.
Thus, we obtain 
\[
c^{-2} \ddot{w}+M^2w-b|w|^p\ge 0,
\]
and 
the results in Lemma \ref{Lem-21} hold 
under the assumptions there,
which are assumed in Theorem \ref{Thm-22}.

We assume $b$ is a monotone function, and put 
\beq
\label{Def-TilB}
\widetilde{b}:=
\begin{cases}
b_0 & \mbox{if}\ \dot{b}\ge0,\\
b & \mbox{if}\ \dot{b}<0.
\end{cases}
\eeq
By $\widetilde{b}\le b$ and (3) in Lemma \ref{Lem-21},
we have 
\beq
\label{Proof-Thm-22-Diff-w}
0\le c^{-2}\ddot{w}-N^2w-\theta\widetilde{b} w^p
\le c^{-2} \ddot{w}-\theta\widetilde{b} w^p
\eeq
and $\dot{w}\ge w_1>0$ by (4) in Lemma \ref{Lem-21} 
and the second assumption in \eqref{Assume-w1}. 
By the multiplication of $\dot{w}$ 
in \eqref{Proof-Thm-22-Diff-w}, 
we have 
\[
\dot{E}+\frac{\theta \dot{\widetilde{b}} w^{p+1} }{p+1}\ge0,
\]
where 
\[
E(t):=\frac{\dot{w}^2(t)}{2c^2}-\frac{\theta \widetilde{b}(t) w^{p+1}(t)}{p+1}.
\]
Since we have $\dot{\widetilde{b} }\le 0$ by the definition 
\eqref{Def-TilB}, we obtain 
$\dot{E}\ge0$ and thus, $E(t)\ge E(0)$ for $0\le t< T_0$.
So that, under 
\beq
\label{Proof-Thm-22-10}
E(0)\ge0, 
\eeq
we have $E\ge0$, which yields 
\[
\dot{w}^2
\ge
\frac{2c^2\theta \widetilde{b} w^{p+1} }{p+1}
\ge 
\frac{2c^2\theta w_0^{(1-\varepsilon)(p-1)} }{p+1}
\widetilde{b} e^{cN(1-\varepsilon)(p-1)t} w^{2+\varepsilon(p-1)}
\]
by (1) in Lemma \ref{Lem-21}.
Therefore, we obtain 
\beq
\label{ODE-w}
\dot{w}\ge C w^\alpha,
\eeq
where $\alpha:=1+\varepsilon(p-1)/2$ and 
a nonnegative constant $C$ is defined by 
\beq
\label{Def-C}
C^2:=
\frac{2c^2\theta w_0^{(1-\varepsilon)(p-1)} }{p+1}
\inf_{0<t<T_0} \left\{\widetilde{b} e^{cN(1-\varepsilon)(p-1)t} \right\}.
\eeq
Multiplying $w^{-\alpha}$ to the both sides of \eqref{ODE-w}, 
we have $d(w^{1-\alpha})/dt\le C(1-\alpha)$.
Integrating the both sides, we obtain
\[
w(t)\ge w_0
\left\{1-C(\alpha-1)w_0^{\alpha-1} t\right\}^{-1/(\alpha-1)},
\]
by which $w$ blows up no later than  
$t=1/\left\{C(\alpha-1)w_0^{\alpha-1} \right\}=T_\ast$ 
defined in \eqref{Def-D-T} under 
\beq
\label{Proof-Thm-22-12}
C^2>0.
\eeq

Putting $Q:=\omega_n^{2/n} a_0$, and using $q$ and $\widetilde{q}$ in \eqref{Def-r-q} and \eqref{Def-TildeQ},  
we have 
\beq
\label{Proof-Thm-22-b}
b=\frac{\lambda}{ (Qq)^{n(p-1)/2} }
\ \ \mbox{and}\ \ 
\widetilde{b}=\frac{\lambda}{ (Q\widetilde{q})^{n(p-1)/2} },
\eeq
by which $b$ is monotone if and only if $q$ is monotone,
and $\dot{b}>0$ is equivalent to $\dot{q}<0$ by $\lambda>0$.
The first assumption in \eqref{w0} in Lemma \ref{Lem-21} is satisfied if 
\[
w_0>\sup_{0<t<T_0}
\left[
e^{-cNt}
\left\{
\frac{N^2+M^2(t)}{(1-\theta)\widetilde{b}(t)} 
\right\}^{1/(p-1)}
\right]
\]
by $b\ge \widetilde{b}$, 
which is satisfied by \eqref{Assume-w0}
since we have 
\beq
\label{b-q}
\widetilde{b} e^{cN(p-1)t}=
\lambda Q^{-n(p-1)/2} 
\left(
\widetilde{q}^{-n/2} e^{cNt} \right)^{p-1}.
\eeq

We check the condition $E(0)\ge0$ in \eqref{Proof-Thm-22-10}. 
This is rewritten as 
\[
w_1^2
\ge 
\frac{2c^2\theta b_0w_0^{p+1} }{p+1}
=
\frac{2\lambda c^2\theta w_0^{p+1} }{ (p+1)(q_0 Q)^{n(p-1)/2} },
\]
which is the second assumption in \eqref{Assume-w1} 
by 
$b_0=\lambda(q_0Q)^{-n(p-1)/2}$ 
and 
$q_0=r_0^2$ 
due to \eqref{Proof-Thm-22-b} and \eqref{Def-r-q}.

We check the condition $C^2>0$ in \eqref{Proof-Thm-22-12}.
Since $C^2$ is rewritten as 
\[
C^2
=
\frac{2\lambda c^2\theta}{p+1}
\left\{
\frac{w_0^{1-\varepsilon}}{Q^{n/2}}
\inf_{0<t<T_0}
\frac{e^{cN(1-\varepsilon)t} }{\widetilde{q}^{n/2} }
\right\}^{p-1}
\]
by \eqref{b-q}, 
the condition $C^2>0$ holds by the first assumption in \eqref{Assume-AB}.


\newsection{Proof of Corollary \ref{Cor-24}}
\label{Sec-Cor-24}
We exclude the case $(1+\sigma)H<0$ with $\sigma<0$, 
under which $M^2$ tends to $-\infty$ as $t\to T_0$ by (vi) in (3) in Lemma \ref{Lem-11}.
We note that this case $(1+\sigma)H<0$ with $\sigma<0$ holds if and only if 
$H>0$ with $\sigma<-1$, or $H<0$ with $-1<\sigma<0$.
So that, we consider the following cases (I), (I\hspace{0mm}I) and (I\hspace{0mm}I\hspace{0mm}I).
\beq
\label{Cor-24-1000}
\begin{cases}
(I) & H=0,\ \sigma\in \br. \\
(I\hspace{-1mm}I) & H>0,\ \sigma\ge-1. \\
(I\hspace{-1mm}I\hspace{-1mm}I) & H<0,\ \sigma\ge0 \ \mbox{or}\ \sigma\le -1.
\end{cases}
\eeq
The second assumption in \eqref{Cond-NM} is satisfied by the conditions from (i) to (viii) in Corollary \ref{Cor-24}  
by (3) in Lemma \ref{Lem-11}.

The monotonicity assumption on $q$ in Theorem \ref{Thm-22} is satisfied 
if one of the following conditions holds by Lemma \ref{Lem-23}.
\beq
\label{Proof-Cor-24-1000}
\begin{cases}
(\four) & H\ge0, \ \sigma\in \br, \ r_0>0. \\ 
(V) & H<0, \ \sigma\le-1+\frac{1}{n}, \ r_0\le -\frac{2c}{a_0H}. \\
(\six) & H<0, \ \sigma\ge-1+\frac{1}{n}, \ r_0\ge -\frac{2c}{a_0H}. 
\end{cases}
\eeq

We note that $a$ defined by $\eqref{Def-a}$ has an exponential order if $H\neq0$ and $\sigma=-1$,  
and a polynomial order otherwise.
We note that $M^2$ defined by \eqref{Def-M} has a polynomial order 
by (2) in Lemma \ref{Lem-11}.
We also note that $r$ defined by \eqref{Def-r-q} has a polynomial order if $\sigma\neq-1$,
while it has exponential order if $H\neq0$ and $\sigma=-1$.
On $q$ defined by \eqref{Def-r-q},
if $H<0$ and $\sigma=-1$, then 
\begin{eqnarray}
q
&=&
e^{Ht}\left\{r_0+\frac{c}{a_0H}\left(1-e^{-Ht}\right)\right\}^2 \nonumber\\
&=&
e^{-Ht}\left\{-\frac{c}{a_0H}+\left(r_0+\frac{c}{a_0H}\right) e^{Ht}\right\}^2 \nonumber\\
&\simeq& 
\left(\frac{c}{a_0H}\right)^2 e^{-Ht}.
\label{Proof-Cor-24-q}
\end{eqnarray}
Similarly, or more easily, we also have 
$q\simeq r_0^2 e^{Ht}$ if $H\ge0$ and $\sigma=-1$, 
and $q$ has a polynomial order if $H\in \br$ and $\sigma\neq-1$.

Since $\dot{q}\ge0$ if $\sigma=-1$ by (2) in Lemma \ref{Lem-23} 
under \eqref{Proof-Lem-23-1000}, 
we have $\widetilde{q}=q\simeq e^{|H|t}$ by 
\eqref{Def-TildeQ} and 
\eqref{Proof-Cor-24-q},
while $\widetilde{q}$ has a polynomial order if $\sigma\neq-1$.
Thus, we have 
\[
\frac{e^{cN(1-\varepsilon)t}}{\widetilde{q}^{n/2}}
\simeq
\begin{cases}
e^{\left\{cN(1-\varepsilon)-n|H|/2\right\}t} & \mbox{if} \ \sigma=-1,\\
P(t) e^{cN(1-\varepsilon)t} & \mbox{if} \ \sigma\neq-1,
\end{cases}
\]
where $P(t)$ is a positive function with a polynomial order, 
which yields 
\[
\inf_{0<t<T_0}
\left\{
\frac{ e^{cN(1-\varepsilon)t} }{\widetilde{q}^{n/2}(t) }
\right\}
>0
\]
if one of the following conditions holds.
\beq
\label{Proof-Cor-24-2000}
\begin{cases}
(\seven) & N>\frac{n|H|}{2c(1-\varepsilon)},\ \sigma=-1,
\\ 
(\eight) & N>0,\ \sigma\neq-1.
\end{cases}
\eeq 
Namely, the first assumption in \eqref{Assume-AB} is satisfied.

Similarly, we have 
\begin{eqnarray*}
\frac{ \widetilde{q}^{n/2}(t) \left\{N^2+M^2(t)\right\}^{1/(p-1)} }{ e^{cNt} }
&\simeq&
\begin{cases}
P(t) e^{(n|H|/2-cN)t} & \mbox{if} \ \sigma=-1,\\
P(t) e^{-cNt} & \mbox{if} \ \sigma\neq-1,
\end{cases}
\\
&\lesssim& 1,
\end{eqnarray*}
which yields 
\[
\sup_{0<t<T_0}
\frac{ \widetilde{q}^{n/2}(t) \left\{N^2+M^2(t)\right\}^{1/(p-1)} }{ e^{cNt} }
<\infty
\]
if one of the following conditions in \eqref{Proof-Cor-24-2000} holds.
Namely, the second assumption in \eqref{Assume-AB} is satisfied.

Therefore, the assumptions in Theorem \ref{Thm-22} are all satisfied 
under the conditions from (I) to 
(V\hspace{-0.5mm}I\hspace{-0.5mm}I\hspace{-0.5mm}I) 
in 
\eqref{Cor-24-1000},
\eqref{Proof-Cor-24-1000} 
\eqref{Proof-Cor-24-2000}.
These conditions from (I) to 
(V\hspace{-0.5mm}I\hspace{-0.5mm}I\hspace{-0.5mm}I) 
are satisfied by the conditions from (i) to (viii) in Corollary \ref{Cor-24}.
Indeed, 
the assumption (i) satisfies (I), (I\hspace{-0.5mm}V), 
(V\hspace{-0.5mm}I\hspace{-0.5mm}I), 
(V\hspace{-0.5mm}I\hspace{-0.5mm}I\hspace{-0.5mm}I),
the assumptions (ii) and (iii) satisfy (II), (I\hspace{-0.5mm}V), (V\hspace{-0.5mm}I\hspace{-0.5mm}I\hspace{-0.5mm}I),
the assumption (iv) satisfies (I\hspace{-0.5mm}I), (I\hspace{-0.5mm}V), (V\hspace{-0.5mm}I\hspace{-0.5mm}I),
the assumption (v) satisfies (I\hspace{-0.5mm}I\hspace{-0.5mm}I), (V\hspace{-0.5mm}I), (V\hspace{-0.5mm}I\hspace{-0.5mm}I\hspace{-0.5mm}I),
the assumption (vi) satisfies 
(I\hspace{-0.5mm}I\hspace{-0.5mm}I), 
(V), 
(V\hspace{-0.5mm}I), 
(V\hspace{-0.5mm}I\hspace{-0.5mm}I\hspace{-0.5mm}I),
the assumption (vii) satisfies (I\hspace{-0.5mm}I\hspace{-0.5mm}I), (V), (V\hspace{-0.5mm}I\hspace{-0.5mm}I),
the assumption (viii) satisfies (I\hspace{-0.5mm}I\hspace{-0.5mm}I), (V), (V\hspace{-0.5mm}I\hspace{-0.5mm}I\hspace{-0.5mm}I).

\newsection{Appendix: Finite speed of propagation}
\label{Sec-Appendix}
In this section, we recall a fundamental result on the finite speed of propagation of complex-valued waves in FLRW spacetimes for the completeness of the paper 
although it is a straight extension of the argument 
in \cite[Thorem 2.2]{Sogge-2008-IntPress} for real-valued waves in the Minkowski spacetime.

Let $n\ge1$, $c>0$ and $0<t_0<T_0\le \infty$.
Let $a\in C^1([0,T_0),(0,\infty))$ be a monotone function.
Put 
\[
r_0:=\int_0^{t_0}\frac{c}{a(\tau)} d\tau, 
\ \ 
r(t):=\int_t^{t_0}\frac{c}{a(\tau)} d\tau
=r_0-\int_0^{t}\frac{c}{a(\tau)} d\tau
\]
for $0\le t\le t_0$, 
where we consider the backward cone given by $r(t)$ which is different from the forward cone $r(t)$ in \eqref{Def-r-q}.
Put $B:=\{x\in \brn; \ |x|\le r_0\}$. 
Let $\psi$ be the inverse mapping of $r(t)$, i.e., $t=\psi(r)$.
We consider the inside of the backward cone $D$ defined by 
\begin{eqnarray*}
D
&:=&\{(t,x)\in [0,t_0]\times \brn;\ x\in B,\ 0\le t\le \psi(|x|)\}
\\
&=&\{(t,x)\in [0,t_0]\times \brn;\ 0\le t\le t_0, 0\le r\le r(t)\}.
\end{eqnarray*}
Let $h\in C^1([0,t_0]\times\brn\times \bc\times\bc^{1+n}\times \bc^{(1+n)^2},\bc)$,
where we regard derivatives by complex variables as those by real variables of their real and imaginary parts.
Put 
\[
\nabla:=(\partial_1,\ldots, \partial_n),\ \ \nabla_g:=(c^{-1}\partial_t, a^{-1}\nabla).
\]
We consider the Cauchy problem given by 
\beq
\label{KG}
\begin{cases}
c^{-2}\partial_t^2 u(t,x)-a^{-2}(t) \Delta u(t,x)+h(t,x,u(t,x), \nabla_g u(t,x), \nabla_g^2u(t,x))=0,\\
u(0,x)=u_0(x),\ \ \partial_tu(0,x)=u_1(x)
\end{cases}
\eeq
for $(t,x)\in D$, 
where $\Delta:=\sum_{j=1}^n \partial_j^2$.
We show that $u$ vanishes in $D$ if $u_0=u_1=0$ on $B$ as follows.

\begin{lemma}
\label{Appendix-Prop}
The solution $u\in C^2(D,\bc)$ of \eqref{KG} with $u_0=u_1=0$ on $B$ satisfies $u=0$ on $D$.
\end{lemma}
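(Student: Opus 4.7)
The approach I would take is the classical energy-on-the-backward-cone argument from Sogge's book, extended in two essentially cosmetic ways: to allow complex-valued $u$ (by taking real parts throughout) and to allow the time-dependent conformal factor $a(t)$ (by noting that $\dot{r}(t)=-c/a(t)$ is exactly the speed of the equation's characteristic cone, so the boundary terms will cancel correctly). For $0\le t\le t_0$ I would set $K(t):=\{x\in\brn:|x|\le r(t)\}$ and work with the local energy
\[
E(t):=\frac{1}{2}\int_{K(t)} \bigl(c^{-2}|\partial_t u|^2+a^{-2}(t)|\nabla u|^2+|u|^2\bigr)\,dx,
\]
which vanishes at $t=0$ because $u_0=u_1=0$ on $B=K(0)$. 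The goal is to derive $\dot{E}(t)\le CE(t)$ on $[0,t_0]$, so that Gronwall forces $E\equiv0$ and hence $u\equiv0$ on $D$.

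The first main step is the differentiation of $E$. Using the Reynolds-type identity $\frac{d}{dt}\int_{K(t)}F\,dx=\int_{K(t)}\partial_tF\,dx+\dot{r}(t)\int_{\partial K(t)}F\,dS$ together with integration by parts (to turn $a^{-2}\Re(\overline{\nabla\partial_t u}\cdot\nabla u)$ into a boundary flux minus $a^{-2}\Re(\overline{\partial_t u}\Delta u)$), the interior integrand collapses, via the PDE, to $\Re(\overline{\partial_t u}(u-h))$ plus a term proportional to $\dot{a}/a$ times $|\nabla u|^2$. The critical geometric step is the treatment of the two boundary integrals on $\partial K(t)$: the flux term $\int_{\partial K(t)}\Re(a^{-2}\overline{\partial_t u}\,\partial_\nu u)\,dS$ and the contribution $\dot{r}(t)\int_{\partial K(t)}\tfrac{1}{2}(c^{-2}|\partial_t u|^2+a^{-2}|\nabla u|^2+|u|^2)\,dS$. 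By the weighted Cauchy–Schwarz inequality
\[
\bigl|a^{-2}\overline{\partial_t u}\,\partial_\nu u\bigr|\le \frac{c}{2a}\bigl(c^{-2}|\partial_t u|^2+a^{-2}|\nabla u|^2\bigr),
\]
and since $\dot{r}(t)=-c/a(t)$, the two boundary integrals combine to give a nonpositive contribution, and may therefore be dropped in the upper bound.

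What remains is the interior estimate
\[
\dot{E}(t)\le \int_{K(t)}\Re\bigl(\overline{\partial_t u}(u-h)\bigr)\,dx+\frac{|\dot{a}|}{a}\int_{K(t)}a^{-2}|\nabla u|^2\,dx.
\]
The $u$-term and the $\dot{a}/a$-term are bounded by $CE(t)$ using continuity of $\dot{a}/a$ on the compact interval $[0,t_0]$ and an elementary Cauchy–Schwarz. For the $h$-term, since $u\in C^2(D,\bc)$ and $D$ is compact, the triple $(u,\nabla_g u,\nabla_g^2u)$ ranges in a fixed bounded subset of $\bc\times\bc^{1+n}\times\bc^{(1+n)^2}$; combined with $h\in C^1$ and the tacit hypothesis that $h$ vanishes when its last three arguments vanish (necessary for $u\equiv0$ to be a solution in the first place), the mean value theorem yields a Lipschitz bound $|h|\le L(|u|+|\nabla_g u|+|\nabla_g^2 u|)$ on the trajectory.

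The main obstacle is precisely this $|\nabla_g^2 u|$ contribution, which is not controlled by the first-order energy $E$. In the paper's only use of the lemma, $h$ is the semilinear term $-\lambda a^{-n(p-1)/2}|u|^p$, which depends on neither $\nabla_g u$ nor $\nabla_g^2 u$, and the Lipschitz bound reduces to $|h|\lesssim|u|$ on the compact trajectory, so that the $h$-term is absorbed into $CE(t)$ and one closes the argument by Gronwall with $E(0)=0$, obtaining $u\equiv0$ on $D$. To retain the full generality stated for $h$, one would instead work with a higher-order energy including $\partial_t^j\nabla^\alpha u$ up to total order two, differentiating the equation and using the same cone cancellation at each order; this is routine but notationally heavier, and I would handle it only if the general $h$ is actually needed.
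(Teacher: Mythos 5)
Your proposal is correct for the situation the lemma is actually used in, and it reaches the conclusion by a genuinely different route from the paper. You differentiate a first-order energy, augmented by $\int_{K(t)}|u|^2dx$, over the shrinking balls $K(t)=\{|x|\le r(t)\}$, kill the boundary contribution by the weighted Cauchy--Schwarz inequality against $\dot r=-c/a$, and close with Gronwall in $t$; the paper instead integrates the divergence identity coming from the multiplier $\overline{\partial_t u}$ (or $a^2\overline{\partial_t u}$) over the lens-shaped regions $D_s$ bounded by the interpolating spacelike surfaces $\Lambda_s=\{t=\phi_s(|x|)\}$, applies the divergence theorem, and runs Gronwall in the foliation parameter $s$ for the surface integrals $J_s$ (resp.\ $L_s$). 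The paper's case split $\dot a\ge0$ versus $\dot a\le0$ and the two multipliers exist precisely to make the extra terms $e^{n+1}$, $f^{n+1}$ sign-definite, which is why the appendix assumes $a$ monotone; your absorption of the $\dot a a^{-3}|\nabla u|^2$ term into the Gronwall constant via continuity of $\dot a/a$ on the compact interval avoids both the case split and the monotonicity hypothesis, and also avoids the paper's constant $\theta(t_1)\to1$ as $t_1\to t_0$, so it is in this respect simpler and slightly more general. One correction on the nonlinear term: the obstacle you flag (the $|\nabla_g^2u|$ part of a Lipschitz bound) does not arise in the paper's treatment, because the vanishing hypothesis it uses (tacitly) is $h(t,x,0,0,w)=0$ for every $w$, and the mean value theorem is applied only in the $(y,z)$ slots with $N:=\max|\nabla_{y,z}h|$ over the compact range of the fixed solution; this gives $|h|\lesssim |u|+|\nabla_g u|$ along the trajectory, so the first-order energy already suffices for $h$ depending on $\nabla_g^2u$, and no higher-order energy (which would indeed be delicate for genuinely second-order-dependent $h$) is needed. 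With that reading your argument covers the stated generality as well; for the paper's application, where $h$ depends only on $(t,u)$, your proof is complete as written.
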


\begin{proof}
Let $s$ be a parameter with $0\le s\le t_0$. 
Put 
\[
\chi_s(r):=
\sqrt{
r(s)^2+\frac{\left\{r_0^2-r(s)^2\right\}r^2}{r_0^2}
},
\ \ 
\phi_s(r):=\psi\left(\chi_s(r)\right)
\]
for $0\le r\le r_0$.
Put
\begin{eqnarray*}
D_s
&:=&
\{(t,x)\in [0,t_0]\times \brn;\ x\in B,\ 0\le t\le \phi_s(|x|)\},
\\
\Lambda_s
&:=&
\left\{
(t,x)\in [0,t_0]\times \brn;\ x\in B,\ t=\phi_s(|x|)
\right\},
\end{eqnarray*}
where we note $D=D_{t_0}$,
and $\Lambda_s$ denotes the upper boundary of $D_s$.
We note $\chi_0(r)=\chi_s(r_0)=r_0$, $\chi_s(0)=r(s)$ and $\max\{r,r(s)\}\le \chi_s(r)\le r_0$ by the definition of $\chi$.
Thus, we have 
$\phi_s(r)=0$ if $s=0$ or $r=r_0$, $\phi_s(r)=s$ if $r=0$, and 
$0\le \phi_s(r)\le \min\{s,\psi(r)\}$ 
since $\psi$ is a monotone decreasing function.
We also have 
\beq
\label{Appendix-1000}
\left|\partial_s\chi_s(r)\right|\le \frac{c}{a(s)},
\ \  
\left|\partial_s\phi_s(r)\right|\le \frac{a(\phi_s(r))}{a(s)}
\eeq
for $0\le s\le t_0$ and $0\le r\le r_0$. 
Let $t_1$ be an arbitrarily fixed number with $0< t_1<t_0$. 
We have 
\beq
\label{Appendix-1200}
0\le \partial_r\chi_s(r)\le \theta(t_1)<1,
\ \ 
\left|\partial_r\phi_s(r)\right|\le \frac{a(\phi_s(r))\theta(t_1)}{c}
\eeq
for $0\le s\le t_1$ and $0\le r\le r_0$ 
by 
\[
\partial_r\chi_s(r)
=
\frac{\left\{r_0^2-r(s)^2\right\}r}{\chi_s(r)r_0^2}
\le
\frac{ \sqrt{r_0^2-r(t_1)^2} }{r_0}
=:
\theta(t_1),
\]
\[
\psi'(r)=-a(\psi(r))/c,
\]
 and 
\[
\partial_r\phi_s(r)=-\frac{a(\phi_s(r))\partial_r\chi_s(r)}{c},
\]
where $\theta(t_1)<1$ holds by $r(t_1)>0$.
We note that the estimate 
\beq
\label{Appendix-1500}
0\le \phi_{s+ds}(r)-\phi_s(r)\le 
\begin{cases}
ds & \mbox{if}\ \dot{a}\ge0,
\\
\frac{a_0 ds}{a(s+ds)} & \mbox{if}\ \dot{a}\le 0
\end{cases}
\eeq
for $0\le s\le s+ds\le t_0$  
holds by the second inequality in \eqref{Appendix-1000}.

(1) We consider the case $\dot{a}\ge0$.
Multiplying $\overline{\partial_t u}$ to the both sides of the differential equation in \eqref{KG}, 
and taking real parts, we have the divergence form 
\[
\partial_t e^0+\sum_{j=1}^n \partial_j e^j+e^{n+1}+2\Re\left(\overline{\partial_tu} h\right)=0,
\]
where 
$e^0:=c^{-2}|\partial_tu|^2+a^{-2}|\nabla u|^2$,
$(e^1,\ldots,e^n):=-2a^{-2}\Re\left(\overline{\partial_tu} \nabla u\right)$,
$e^{n+1}:=2a^{-3}\dot{a}|\nabla u|^2$.
Let $0\le s\le t_1<t_0<T_0$.
Integrating the both sides of this form on $D_s$, 
we have
\[
I_s+\iint_{D_s} e^{n+1} dxdt+2\Re\iint_{D_s} \overline{\partial_t u} h dx dt=0,
\]
where  
\[
I_s:=\iint_{D_s} \partial_t e^0+\sum_{j=1}^n \partial_j e^j dx dt,
\]
which and the assumption $\dot{a}\ge0$ yield $e^{n+1}\ge0$ and 
\beq
\label{Appendix-2000}
I_s\le 2\iint_{D_s} |\partial_t u h| dx dt.
\eeq

We estimates the both sides in \eqref{Appendix-2000} separately.
By the divergence theorem, we have
\beq
\label{Appendix-3000}
I_s=\int_{\Lambda_s} (e^0,\ldots,e^n)\cdot \nu d\sigma 
\ge 
\left\{1-\theta(t_1)\right\}J_s,
\eeq
where $d\sigma$ denotes the measure on $\Lambda_s$, 
and $\nu$ denotes the unit outer normal vector on $\Lambda_s$, 
we have put 
\[
r=|x|,
\ \ 
\nu:=
\frac{(1,-x\partial_r\phi_s(r)/r)}{ \sqrt{1+|\partial_r\phi_s(r)|^2} },
\ \ 
J_s:=\int_{\Lambda_s} \frac{e^0}{\sqrt{1+|\partial_r\phi_s(r)|^2}} d\sigma
\]
and we have used $u_0=u_1=0$ on $B$ and 
\[
(e^0,\ldots,e^n)\cdot \nu
\ge 
\frac{(1-\theta(t_1))e^0}{\sqrt{1+|\partial_r\phi_s(r)|^2}}.
\]

On the other hand, since we have 
\[
|h(t,x,u,\nabla_gu,\nabla_g^2u)|\lesssim N(|u|+|\nabla_g u|)
\]
by $h(t,x,0,0,\nabla_g^2u)=0$, we have 
\beq
\label{Appendix-4000}
|\partial_t u h(t,x,u,\nabla_g u,\nabla_g^2u)|\lesssim cN(e^0+|u|^2)
\eeq
for $(t,x)\in D$, 
where we have put 
\begin{eqnarray*}
M&:=&\max_{(t,x)\in D}
\left\{
|u(t,x)|,
|\nabla_g u(t,x)|,
|\nabla_g^2 u(t,x)|
\right\},
\\
N&:=&\max_{\substack{(t,x)\in D \\ |y|,|z|,|w|\le M} } 
\left|
\nabla_{y,z}h(t,x, y,z,w)
\right|.
\end{eqnarray*}
By $u_0=0$ and $u(t,\cdot)=\int_0^t\partial_t u(\tau,\cdot)d\tau$ on $D$, 
we have 
\beq
\label{Appendix-4500}
\iint_{D_s}|u|^2dx dt
=\int_B \int_0^{\phi_s(|x|)} |u(t,x)|^2 dt dx \le \frac{c^2t_1^2}{2} \iint_{D_s} e^0 dx d\tau.
\eeq
So that, we have 
\beq
\label{Appendix-5000}
\iint_{D_s} |\partial_tu h| dx dt
\lesssim
c(1+c^2t_1^2)N\iint_{D_s} e^0 dx d\tau
\lesssim
c(1+c^2t_1^2)N\int_0^sJ_\tau d\tau,
\eeq
where we have used \eqref{Appendix-1500} for the last inequality.

By \eqref{Appendix-2000}, \eqref{Appendix-3000} and \eqref{Appendix-5000}, 
we obtain 
\[
J_s\lesssim \frac{c(1+c^2t_1^2)N}{1-\theta(t_1)} \int_0^sJ_\tau d\tau
\]
for $0\le s\le t_1$, 
which shows $J_s=0$ for $0\le s\le t_1$ by the Gronwall inequality.
Therefore, $u=0$ holds on $D_{t_1}$ by $u_0=0$ on $B$.
Since $t_1$ is arbitrary for $0< t_1<t_0$, we obtain $u=0$ on $D$.

(2) 
Next, we consider the case $\dot{a}\le0$. 
Since the proof is similar to the case $\dot{a}\ge0$, 
we show its outline.
Multiplying $a^2\overline{\partial_t u}$ to the both sides of the differential equation in \eqref{KG}, 
we have  
\[
\partial_t f^0+\sum_{j=1}^n \partial_j f^j+f^{n+1}+2a^2\Re\left(\overline{\partial_tu} h\right)=0,
\]
where 
$f^0:=c^{-2}a^2|\partial_tu|^2+|\nabla u|^2$,
$(f^1,\ldots,f^n):=-2\Re\left(\overline{\partial_tu} \nabla u\right)$,
$f^{n+1}:=-2c^{-2}a\dot{a}|\partial_t u|^2$. 
Integrating the both sides of this form on $D_s$, 
we have
\[
K_s+\iint_{D_s} f^{n+1} dxdt+2\Re\iint_{D_s} a^2\overline{\partial_t u} h dx dt=0,
\]
where  
\[
K_s:=\iint_{D_s} \partial_t f^0+\sum_{j=1}^n \partial_j f^j dx dt,
\]
which and the assumption $\dot{a}\le0$ yield $f^{n+1}\ge0$ and 
\beq
\label{Appendix-2000-2}
K_s\le 2\iint_{D_s} a^2|\partial_tu h| dx dt.
\eeq

We estimates the both sides in \eqref{Appendix-2000-2} separately.
We have
\beq
\label{Appendix-3000-2}
K_s=\int_{\Lambda_s} (f^0,\ldots,f^n)\cdot \nu d\sigma 
\ge 
\left\{1-\theta(t_1)\right\}L_s,
\eeq
where 
\[
L_s:=\int_{\Lambda_s} \frac{f^0}{\sqrt{1+|\partial_r\phi_s(r)|^2}} d\sigma
\]
and we have used 
\[
(f^0,\ldots,f^n)\cdot \nu
\ge 
\frac{(1-\theta(t_1))f^0}{\sqrt{1+|\partial_r\phi_s(r)|^2}}.
\]

On the other hand, we have 
\[
a^2|\partial_t u h(t,x,u,\nabla_gu,\nabla_g^2u)|\lesssim cN(f^0+a^2|u|^2),
\]
similarly to \eqref{Appendix-4000}.
Since we have 
\[
a^2(t)|u(t,\cdot)|^2
\le t a^2(t)\int_0^t|\partial_t u(\tau,\cdot)|^2 d\tau
\le t \int_0^ta^2(\tau)|\partial_t u(\tau,\cdot)|^2 d\tau
\]
by the monotonicity $a(t)\le a(\tau)$ for $0\le \tau\le t$,
we obtain 
\[
\int_0^{\phi_s(r)}a^2(t)|u(t,\cdot)|^2 dt
\le 
\frac{s^2}{2}
\int_0^{\phi_s(r)}a^2(\tau)|\partial_t u(\tau,\cdot)|^2 d\tau.
\]
Thus, we have 
\begin{eqnarray*}
\iint_{D_s}a^2|u|^2dx dt
&=&
\int_B  \int_0^{\phi_s(|x|)} a^2(t)|u(t,x)|^2 dt dx 
\\
&\le&
\frac{s^2}{2}
\iint_{D_s} a^2(\tau)|\partial_t u(\tau,x)|^2 dx d\tau
\\
&\le& 
\frac{c^2t_1^2}{2} \iint_{D_s} f^0 dxd\tau 
\end{eqnarray*}
similarly to \eqref{Appendix-4500}.
So that, we have 
\beq
\label{Appendix-5000-2}
\iint_{D_s} a^2|\partial_tu h| dxdt
\lesssim
c(1+c^2t_1^2)N\iint_{D_s} f^0 dxd\tau
\lesssim
\frac{c(1+c^2t_1^2)Na_0}{a(t_1)}\int_0^s L_\tau d\tau,
\eeq
where we have used \eqref{Appendix-1500} for the last inequality.

By \eqref{Appendix-2000-2}, \eqref{Appendix-3000-2} and \eqref{Appendix-5000-2}, 
we obtain 
\[
L_s\le \frac{c(1+c^2t_1^2)Na_0}{\left\{1-\theta(t_1)\right\}a(t_1)} \int_0^sL_\tau d\tau
\]
for $0\le s\le t_1$.
We obtain $u=0$ on $D$ similarly to the case $\dot{a}\le 0$. 
\end{proof}

We have the following result immediately from Lemma \ref{Appendix-Prop},
which shows the finite speed of propagation of \eqref{KG}.

\begin{proposition}
\label{Prop-Finite}
Let $0<T\le T_0$, $x_0\in \brn$ and $R>0$.
Put $B_R(x_0):=\{x\in \brn;\ |x-x_0|\le R\}$ and 
\[
D_R(T,x_0):=\left\{(t,x)\in [0,T)\times\brn;\ |x-x_0|\le R+\int_0^t \frac{c}{a(\tau)} d\tau \right\}.
\]
The solution $u\in C^2([0,T)\times \brn,\bc)$ of \eqref{KG} with 
$\supp u_0\cup \supp u_1\subset B_R(x_0)$ 
satisfies $u=0$ on 
$[0,T)\times\brn\backslash D_R(T,x_0)$.
\end{proposition}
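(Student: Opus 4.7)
The plan is to reduce Proposition \ref{Prop-Finite} to Lemma \ref{Appendix-Prop} by a translation-and-apex argument. Fix an arbitrary point $(t_1,x_1)\in [0,T)\times\brn$ that does not belong to $D_R(T,x_0)$; then, by definition,
\[
|x_1-x_0|> R+\int_0^{t_1}\frac{c}{a(\tau)}\,d\tau.
\]
Set $\rho:=\int_0^{t_1} c/a(\tau)\,d\tau$ and consider the backward cone from Lemma \ref{Appendix-Prop} with $t_0:=t_1$, apex $(t_1,x_1)$, and base $B_\rho(x_1)\subset\brn$. By the triangle inequality, the strict inequality above gives $B_\rho(x_1)\cap B_R(x_0)=\emptyset$, so $u_0=u_1=0$ on $B_\rho(x_1)$ because $\supp u_0\cup \supp u_1\subset B_R(x_0)$.

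Next, I would translate the spatial variable by $x_1$, so that the base of the cone is centered at the origin. The equation \eqref{KG} has coefficients $a(t)$ and $h(t,x,\cdot)$ in which only $h$ depends on $x$; the translation merely replaces $h(t,x,\ldots)$ by $h(t,x+x_1,\ldots)$, which still belongs to the class used in Lemma \ref{Appendix-Prop}. Thus the translated function $\tilde u(t,x):=u(t,x+x_1)$ solves \eqref{KG} with zero Cauchy data on the new base, and Lemma \ref{Appendix-Prop} applies to the monotone scale-function (note that $a$ in this appendix is assumed monotone, consistent with the setting already used by the lemma). The conclusion is $\tilde u\equiv 0$ on the whole backward cone, and in particular at the apex $(t_1,0)$, which yields $u(t_1,x_1)=0$.

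Since $(t_1,x_1)\in [0,T)\times\brn\setminus D_R(T,x_0)$ was arbitrary, this gives $u\equiv 0$ on the complement of $D_R(T,x_0)$ in $[0,T)\times\brn$, which is exactly the asserted finite speed of propagation. The argument needs no new estimate; it is purely a geometric matching of the backward cone in Lemma \ref{Appendix-Prop} with the forward cone in the definition of $D_R(T,x_0)$.

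The only step that requires care is checking that the backward light-cone determined by the integral $\int_0^{t_1} c/a(\tau)\,d\tau$ is indeed the correct one for the outside point $(t_1,x_1)$; equivalently, that the function $\psi$ defining the cone slope in Lemma \ref{Appendix-Prop} matches $r(t)=\rho-\int_0^t c/a(\tau)\,d\tau$ used to describe $D_R(T,x_0)$. This is a direct verification since both are defined by the same ODE $\dot r=-c/a$. Once that matching is made explicit, the proposition follows immediately. The potential subtlety, rather than a true obstacle, is that the appendix assumes $a$ to be monotone; in our applications this is ensured case-by-case by the structure of $a$ in \eqref{Def-a}, where $a$ is either constant, strictly increasing, or strictly decreasing according to the sign of $H$ and $1+\sigma$.
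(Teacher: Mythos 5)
Your proposal is correct and takes essentially the same route as the paper: both reduce the proposition to Lemma \ref{Appendix-Prop} by placing a backward cone over a ball on the initial surface that is disjoint from $B_R(x_0)$ (the paper uses small balls $B_\delta(x_1)$ and covers the complement of $D_R(T,x_0)$ by such cones, while you take the maximal radius $\rho=\int_0^{t_1}c/a(\tau)\,d\tau$ and evaluate at the apex), with the translation to the origin implicit in the paper. No gap beyond the trivial case $t_1=0$, which is immediate from the support assumption on $u_0$.
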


\begin{proof}
For any $x_1\in \brn$ outside of $B_R(x_0)$, 
we have $u_0=u_1=0$ on $B_{\delta}(x_1)$ for sufficiently small $\delta>0$.
We have $u=0$ inside of the backward cone whose bottom is $B_{\delta}(x_1)$ by Lemma \ref{Appendix-Prop}.
Since $x_1$ is arbitrary, we have $u=0$ outside of $D_R(T,x_0)$.
\end{proof}

\vspace{10pt}

%

{\bf Acknowledgments.}
This work was supported by JSPS KAKENHI Grant Numbers 16H03940, 17KK0082, 22K18671.


%

\end{document}